\def\d{\rmd}
\def\e{\rme}
\def\a{\alpha}
\def\b{\beta}
\def\ep{\varepsilon}\def\ph{\varphi}
\def\k{\kappa}\def\th{\vartheta}
\def\w{\omega}
\def\tfrac#1#2{{\case{#1}{#2}}}
\def\ifrac#1#2{{#1}/{#2}}
\def\p{Painlev\'e}
\def\peq{\p\ equation}
\def\peqs{\p\ equations}
\def\bk{B\"acklund}
\def\bt{\bk\ transformation}
\def\bts{\bk\ transformations}
\def\PI{\hbox{\rm P$_{\rm I}$}}
\def\PII{\hbox{\rm P$_{\rm II}$}}
\def\PIII{\hbox{\rm P$_{\rm III}$}}
\def\PIV{\hbox{\rm P$_{\rm IV}$}}
\def\PV{\hbox{\rm P$_{\rm V}$}}
\def\PVI{\hbox{\rm P$_{\rm VI}$}}
\def\PIIIi{\mbox{\rm P$_{\rm III'}$}}
\def\dPIV{\hbox{\rm dP$_{\rm IV}$}}
\def\HV{\hbox{$\mathcal{H}_{\rm V}$}}
\def\HIIIi{\hbox{$\mathcal{H}_{\rm III'}$}}
\def\sIII{\sigma}
\def\sV{\sigma}
\def\bk{B\"acklund}
\def\hide#1{}
\def\W{\mathcal{W}}
\def\ds{\displaystyle}
\newcommand{\deriv}[3][]{\frac{\d^{#1}{#2}}{{\d{#3}}^{#1}}}
\def\dz#1{\delta_{#1}}
\def\Z{\mathbb{Z}}
\def\N{\mathbb{N}}
\def\R{\mathbb{R}}
\newcommand{\BesselJ}[1]{J_{#1}}
\newcommand{\BesselI}[1]{I_{#1}}
\newcommand{\BesselK}[1]{K_{#1}}
\newcommand{\WhitM}[2]{M_{#1,#2}}
\newcommand{\WhitW}[2]{W_{#1,#2}}
\newcommand{\HyperpFq}[2]{{}_{#1}F_{#2}}
\newtheorem{theorem}{Theorem}[section]
\newtheorem{lemma}[theorem]{Lemma}
\newtheorem{corollary}[theorem]{Corollary}
\theoremstyle{definition}
\newtheorem{remarks}[theorem]{Remarks}
\def\intS{\int_a^b}
\newcommand{\pderiv}[3][]{\frac{\partial^{#1}{#2}}{{\partial{#3}}^{#1}}}
\begin{document}

\title[Discrete orthonormal polynomials and the \p\ equations]{Recurrence coefficients for discrete orthonormal polynomials  and the \p\ equations}
\author{Peter A.\ Clarkson}
\address{School of Mathematics, Statistics and Actuarial Science,
University of Kent, Canterbury, CT2 7NF, UK}
 \ead{P.A.Clarkson@kent.ac.uk}

\begin{abstract}
We investigate semi-classical generalizations of the Charlier and Meixner polynomials, which are discrete orthogonal polynomials that satisfy three-term recurrence relations. It is shown that the coefficients in these recurrence relations can be expressed in terms of Wronskians of modified Bessel functions and confluent hypergeometric functions, respectively for the generalized Charlier and generalized Meixner polynomials. These Wronskians arise in the description of special function solutions of the third and fifth \p\ equations.
\end{abstract}

\pacs{02.30.Gp, 02.30.Hq, 02.30.Ik}
\ams{34M55, 33E17, 33C47}
\maketitle

\section{Introduction}
In this paper we are concerned with the coefficients in the three-term recurrence relations for semi-classical orthonormal polynomials, specifically for  generalizations of the Charlier and Meixner polynomials which are discrete orthonormal polynomials. It is shown that these recurrence coefficients for the generalized Charlier polynomials and generalized Meixner polynomials can respectively be expressed in terms of Wronskians that arise in the description of special function solutions of the third \p\ equation (\PIII)
\begin{equation}\label{eq:PT.DE.PIII}
\deriv[2]{w}{z} = \frac1w \left(\deriv{w}{z}\right)^{2}- \frac{1}{z} \deriv{w}{z}+\frac{Aw^2 + B}{z} + C w^3 + \frac{D}{w},
\end{equation} 
where $A$, $B$, $C$ and $D$ are arbitrary constants, and the fifth \p\ equation (\PV)
\begin{eqnarray}
\deriv[2]{w}{z} &= \left(\frac{1}{2w} + \frac{1}{w-1}\right)\! \left(\deriv{w}{z} \right)^{2} - \frac{1}{z} \deriv{w}{z} 
+\frac{(w-1)^2}{z^2}\left(A w +\frac{B}{w}\right)\nonumber\\ &\qquad+ \frac{C w}{z} + \frac{D w(w+1)}{w-1}.
\label{eq:PT.DE.PV}\end{eqnarray} 
Wronskians for special function solutions of \PIII\ are expressed in terms of modified Bessel functions and for \PV\  in terms of confluent hypergeometric functions. 

The relationship between semi-classical orthogonal polynomials and integrable equations dates back to the work of Shohat \cite{refShohat} in 1939 and later Freud \cite{refFreud} in 1976. However it was not until the work of Fokas, Its and Kitaev \cite{refFIKa,refFIKb} in the early 1990s that these equations were identified as discrete \p\ equations. The relationship between semi-classical orthogonal polynomials and the (continuous) \p\ equations was demonstrated by Magnus \cite{refMagnus95} in 1995. 
\hide{who studied the coefficients in the three-term recurrence relationship for the Freud weight \cite{refFreud}
\begin{equation*}\omega(x;t)=\exp\left(-\tfrac14x^4-tx^2\right),\qquad x\in\R,\end{equation*} with $t\in\R$ a parameter.}
A motivation for this work is that recently 
it has been shown that recurrence coefficients for several semi-classical orthogonal polynomials can be expressed in terms of solutions of \p\ equations, see, for example, \cite{refBC09,refBCE10,refBFvA,refBFSvAZ,refBvA,refCD10,refCF06,refCI10,refCP05,refCZ10,refPACJordaan13,refDZ10,%
refFvA11,refFvA13,refFvAZ,refFO10,refFW07,refSmetvA,refvA07,refvAF03}. 

This paper is organized as follows: 
in \S\ref{sec:peq} we review properties of \PIII\ (\ref{eq:PT.DE.PIII}) and \PV\ (\ref{eq:PT.DE.PV}), including special function solutions and the Hamiltonian structure of these equations;
in \S\ref{sec:op} we review properties of orthogonal polynomials and discrete orthogonal polynomials;
in \S\ref{sec:char} we derive expressions for the recurrence coefficients for the generalized Charlier polynomials in terms of Wronskians that arise in the description of special function solutions of \PIII;
in \S\ref{sec:meix} we derive expressions for the recurrence coefficients for the generalized Meixner polynomials in terms of Wronskians that arise in the description of special function solutions of \PV; and
in \S\ref{sec:dis} we discuss our results.

\section{\label{sec:peq}\peqs}
The six \peqs\ (\PI--\PVI) were first discovered by \p, Gambier and their colleagues in an investigation of which second order ordinary differential equations of the form 
\begin{equation} \label{eq:PT.INT.gen-ode} \deriv[2]{w}{z}=F\left(\deriv{w}{z},w,z\right),  \end{equation} 
where $F$ is rational in $\d w/\d z$ and $w$ and analytic in $z$, have the property that their solutions have no movable branch points. 
They showed that there were fifty canonical equations of the form (\ref{eq:PT.INT.gen-ode}) with this property, now known as the \textit{\p\ property}. Further \p, Gambier and their colleagues showed that of these fifty equations, forty-four can be reduced to linear equations, solved in terms of elliptic functions, or are reducible to one of six new nonlinear ordinary differential equations that define new transcendental functions (see Ince \cite{refInce}). \hide{Amongst these six new nonlinear ordinary differential equations are the third \p\ equation (\PIII)
\begin{equation}\label{eq:PT.DE.PIII}
\deriv[2]{w}{z} = \frac1w \left(\deriv{w}{z}\right)^{2}- \frac{1}{z} \deriv{w}{z}
 +\frac{Aw^2 + B}{z} + C w^3 + \frac{D}{w},
\end{equation}
and the fifth \p\ equation (\PV)
\begin{eqnarray}
\deriv[2]{w}{z} &= \left(\frac{1}{2w} + \frac{1}{w-1}\right)\!
\left(\deriv{w}{z} \right)^{2} - \frac{1}{z} \deriv{w}{z} + 
\frac{(w-1)^2}{z^2}\left(A w +\frac{B}{w}\right)\nonumber\\ &\qquad+ 
\frac{C w}{z} + \frac{D w(w+1)}{w-1},\label{eq:PT.DE.PV}\end{eqnarray}
where $A$, $B$, $C$ and $D$ are arbitrary constants.}%
The \p\ equations can be thought of as nonlinear analogues of the classical special functions \cite{refPAC05review,refFIKN,refGLS02,refIKSY}, and arise in a wide variety of applications, for example random matrices, see \cite{refForrester,refOsipovKanz} and the references therein.

The \peqs\ \PII--\PVI\ possess hierarchies of solutions expressible in terms of classical special functions, 
cf.~\cite{refPAC05review,refGLS02,refMasuda04} and the references therein. 
For \PIII\ (\ref{eq:PT.DE.PIII}) these are expressed in terms of Bessel functions
\cite{refMCB97,refMurata95,refOkamotoPIII}, which we discuss in \S\ref{ssec:piiisols},
and for \PV\ (\ref{eq:PT.DE.PV}) in terms of confluent hypergeometric functions (equivalently, Kummer functions or Whittaker functions)
\cite{refOkamotoPV,refMasuda04,refWat}, which we discuss in \S\ref{ssec:pvsols}.

Each of the \peqs\ \PI--\PVI\ can be written as a (non-autonomous) Hamiltonian system. For \PIII\ (\ref{eq:PT.DE.PIII}) and  \PV\ (\ref{eq:PT.DE.PV}) these have the form
\begin{equation*}\label{sec:PT.HM.DE1}
z\deriv{q}{z}=\pderiv{\mathcal{H}_{\rm J}}{p},\qquad 
z\deriv{p}{z}=-\pderiv{\mathcal{H}_{\rm J}}{q},\qquad {\rm J}={\rm III, V}
\end{equation*}
for a suitable Hamiltonian function $\mathcal{H}_{\rm J}(q,p,z)$\ \cite{refJM,refOkamoto80a,refOkamotoPV,refOkamotoPIII}, which we discuss in \S\ref{ssec:piiiham} and \S\ref{ssec:pvham}. Further, the function $\sigma(z)\equiv\mathcal{H}_{\rm J}(q,p,z)$ satisfies a second-order, second-degree equation, which is often called the \textit{Jimbo-Miwa-Okamoto equation} or \textit{\p\ $\sigma$-equation}, whose solution is expressible in terms of the solution of the associated \peq\ \cite{refJM,refOkamoto80b,refOkamotoPV,refOkamotoPIII}. Hence there are special function solutions of these equations, which are also discussed in \S\ref{ssec:piiiham} and \S\ref{ssec:pvham}. 

\subsection{\label{ssec:piiisols}Special functions solutions of the third \peq.}
In the generic case when $CD\not=0$ in \PIII\ (\ref{eq:PT.DE.PIII}), then we set $C=1$ and $D=-1$, without loss of generality, so in the sequel we consider the equation 
\begin{equation}\label{eq:PT.DE.PIIIgen}
\deriv[2]{w}{z} = \frac1w \left(\deriv{w}{z}\right)^{2}- \frac{1}{z} \deriv{w}{z}
 +\frac{Aw^2 + B}{z} + w^3 - \frac{1}{w}.\end{equation}
Special function solutions of (\ref{eq:PT.DE.PIIIgen}) are expressed in terms of Bessel functions, see \cite{refMCB97,refMurata95,refOkamotoPIII}. 

\begin{theorem}\label{thm:P3sf}{\Eref{eq:PT.DE.PIIIgen} has solutions expressible in terms of Bessel functions if and only if
\begin{equation}\label{eq:PT.SF.eq30} \ep_1A+\ep_2B= 4n+2, \end{equation} 
with $n\in\Z$ and $\ep_1=\pm1$, $\ep_2 =\pm1$ independently.}\end{theorem}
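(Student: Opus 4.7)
The plan is to establish the two implications separately. For the ``if'' direction, I begin by seeking a Riccati reduction of (\ref{eq:PT.DE.PIIIgen}). Substituting the ansatz
\begin{equation*}
\deriv{w}{z} = \ep_1 w^2 + \frac{a}{z}\,w + \ep_2,\qquad \ep_1^2=\ep_2^2=1,
\end{equation*}
into (\ref{eq:PT.DE.PIIIgen}) and matching coefficients of the various powers of $w$ pins down $a$ uniquely and produces the compatibility condition $\ep_1 A+\ep_2 B=2$, which is the $n=0$ case of (\ref{eq:PT.SF.eq30}). The Cole--Hopf linearisation $w=-\ep_1\,\phi'/\phi$ then converts the Riccati equation into a linear second-order ODE for $\phi$ which, after a standard scaling of dependent and independent variables, becomes the modified Bessel equation of an order determined by $a$. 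Hence the seed solution at $n=0$ is a logarithmic derivative of a modified Bessel function.

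For the remaining integers $n$, I would iterate the \bts\ of \PIII. These generate an affine Weyl group acting on the parameters $(A,B)$ and, for an appropriate choice of generators, shift the combination $\ep_1 A+\ep_2 B$ by $\pm 4$ at each application. Starting from the Riccati seed and using the Toda-type bilinear identities that the Bäcklund iterates satisfy, this produces a solution of (\ref{eq:PT.DE.PIIIgen}) for every $\ep_1 A+\ep_2 B=4n+2$, expressible as a ratio of Wronskians of modified Bessel functions; these are precisely the Wronskians to be exploited in \S\ref{sec:char}. A sign-by-sign check over the four choices of $(\ep_1,\ep_2)$ then covers the full statement.

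For the ``only if'' direction, I would appeal to the classification of classical solutions of \PIII\ due to Umemura, Murata and Okamoto: any solution expressible through classical special functions lies in the Bäcklund orbit of a Riccati-linearisable seed. Since the Riccati reduction above is essentially the only one admitted by (\ref{eq:PT.DE.PIIIgen}), its orbit exhausts the Bessel-function solutions, so (\ref{eq:PT.SF.eq30}) is necessary as well as sufficient. The main obstacle is this necessity direction: unlike the algebraic verification on the sufficiency side, it relies on a nontrivial irreducibility/classification result (Umemura's theorem or an equivalent differential-Galois argument) to rule out Bessel-type solutions at parameter values outside (\ref{eq:PT.SF.eq30}). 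The sufficiency side, by contrast, is essentially bookkeeping once the seed and the action of the \bts\ on the Wronskian representation are in hand.
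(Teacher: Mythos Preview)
Your sketch is correct, and in fact it supplies more than the paper does: the paper's own ``proof'' of Theorem~\ref{thm:P3sf} is simply a citation to Gromak~\cite{refGromak78}, Mansfield and Webster~\cite{refMW}, Umemura and Watanabe~\cite{refUW98}, and \cite[\S35]{refGLS02}, with no argument given. What you have outlined is essentially the content of those references: the Riccati ansatz and its linearisation to a Bessel-type equation (Gromak, and \cite[\S35]{refGLS02}), the iteration via \bts\ to reach all integer $n$ (Gromak, Mansfield--Webster), and the appeal to the irreducibility/classification theory for the necessity direction (Umemura--Watanabe). Your identification of the ``only if'' direction as the nontrivial part, requiring Umemura-type arguments rather than algebraic verification, is accurate and is precisely why the paper defers to the literature here rather than proving it in-line.
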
 

\begin{proof}{See Gromak \cite{refGromak78}, Mansfield and Webster \cite{refMW} and Umemura and Watanabe \cite{refUW98}; also \cite[\S35]{refGLS02}.}\end{proof}

\hide{In the case $\ep_1A+\ep_2B=2$, setting $A=2\nu$, then (\ref{eq:PT.DE.PIIIgen}) has solutions expressible in terms of solutions of the Riccati equation
\begin{equation*}\label{eq:PT.SF.eq31}
z\deriv{w}z=\ep_1zw^2+(2\ep_1\nu-1)w+\ep_2z,
\end{equation*} 
which has the solution 
\begin{equation*}\label{eq:PT.SF.eq32}
w(z)=-\ep_1\,\deriv{}{z}\ln\ph_{\nu}(z),
\end{equation*} 
where $\ph_{\nu}(z)$ satisfies
\begin{equation*}\label{eq:PT.SF.eq33}z\deriv[2]{\ph_{\nu}}{z}+(1-2\ep_1\nu)\deriv{\ph_{\nu}}{z}+\ep_1\ep_2z\ph_{\nu}=0,\end{equation*}
which has solution
This equation has the solution 
\begin{equation*}\label{eq:PT.SF.eq33sol}
\ph_{\nu}(z)=\cases{ z^{\nu}\left\{C_1J_{\nu}(z)+C_2Y_{\nu}(z)\right\},&$\mbox{\rm if}\quad \ep_1=1,\phantom{-}\enskip\ep_2=1,$\\
z^{-\nu}\left\{C_1J_{\nu}(z)+C_2Y_{\nu}(z)\right\},&$\mbox{\rm if}\quad \ep_1=-1,\enskip\ep_2=-1,$\\
z^{\nu}\left\{C_1I_{\nu}(z)+C_2K_{\nu}(z)\right\},&$\mbox{\rm if}\quad\ep_1=1,\phantom{-}\enskip\ep_2=-1,$\\
z^{-\nu}\left\{C_1I_{\nu}(z)+C_2K_{\nu}(z)\right\},&$\mbox{\rm if}\quad\ep_1=-1,\enskip\ep_2=1,$}\end{equation*}
with $C_1$ and $C_2$ arbitrary constants, and where $J_{\nu}(z)$, $Y_{\nu}(z)$, $I_{\nu}(z)$ and $K_{\nu}(z)$ are Bessel functions.}%

An alternative form of \PIII, due to Okamoto \cite{refOkamoto80a,refOkamoto80b,refOkamotoPIII}, is obtained by making the transformation $w(z)=u(t)/\sqrt{t}$, with $t=\tfrac14z^2$, in (\ref{eq:PT.DE.PIIIgen}) giving
\begin{equation}\label{eq:PT.DE.PIIIi}
\deriv[2]{u}{t}=\frac1u\left(\deriv{u}{t}\right)^2-\frac1t \,\deriv{u}{t} +\frac{Au^2}{2t^2}+\frac{B}{2t}+\frac{u^3}{t^2}-\frac1u,
\end{equation} 
which is known as \PIIIi.
\Eref{eq:PT.DE.PIIIi} has solutions expressible in terms of solutions of the Riccati equation
\begin{equation}\label{eq:PT.SF.eq31i} 
t\deriv{u}{t}=\ep_1u^2+\nu u +\ep_2t,
\end{equation} 
if and only if $A$ and $B$ satisfy (\ref{eq:PT.SF.eq30}).
\hide{\begin{equation*}\label{eq:PT.SF.eq30i} 
\ep_1A+\ep_2B=4n+2,
\end{equation*} 
with $n\in\Z$ and $\ep_1=\pm1$, $\ep_2=\pm1$, independently.}%
To solve (\ref{eq:PT.SF.eq31i}), we make the transformation
\begin{equation*}\label{eq:PT.SF.eq32i} 
u(t)=-\ep_1t\deriv{}{t}\ln\psi_{\nu}(t),
\end{equation*} 
then $\psi_{\nu}(t)$ satisfies
\begin{equation}\label{eq:PT.SF.eq33i} 
t\deriv[2]{\psi_{\nu}}{t}+(1-\nu)\deriv{\psi_{\nu}}{t}+\ep_1\ep_2\psi_{\nu}=0,
\end{equation} 
which has solution
\begin{equation}\label{eq:PT.SF.eq33isol} \fl \psi_{\nu}(t)=
\cases{ 
t^{\nu/2}\left\{C_1J_{\nu}\big(2\sqrt{t}\big)+C_2Y_{\nu}\big(2\sqrt{t}\big)\right\},&$\mbox{\rm if}\quad \ep_1=1,\phantom{-}\enskip\ep_2=1,$\\
t^{-\nu/2}\left\{C_1J_{\nu}\big(2\sqrt{t}\big)+C_2Y_{\nu}\big(2\sqrt{t}\big)\right\},&$\mbox{\rm if}\quad \ep_1=-1,\enskip\ep_2=-1,$\\
t^{\nu/2}\left\{C_1I_{\nu}\big(2\sqrt{t}\big)+C_2K_{\nu}\big(2\sqrt{t}\big)\right\},&$\mbox{\rm if}\quad\ep_1=1,\phantom{-}\enskip\ep_2=-1,$\\
t^{-\nu/2}\left\{C_1I_{\nu}\big(2\sqrt{t}\big)+C_2K_{\nu}\big(2\sqrt{t}\big)\right\},&$\mbox{\rm if}\quad\ep_1=-1,\enskip\ep_2=1,$}\end{equation}
with $C_1$ and $C_2$ arbitrary constants, and where $J_{\nu}(z)$, $Y_{\nu}(z)$, $I_{\nu}(z)$ and $K_{\nu}(z)$ are {Bessel functions}.

\subsection{\label{ssec:piiiham}Hamiltonian structure for the third \peq.}
The Hamiltonian associated with \PIIIi\ (\ref{eq:PT.DE.PIIIi}) is 
\begin{equation}\label{eq:PT.HM3p.DE31}
\HIIIi(q,p,t) = q^2p^2 -\left(q^2+\th_0q-t\right)p +\tfrac12(\th_0+\th_{\infty})q,
\end{equation} with $\th_0$ and $\th_{\infty}$ parameters, 
where $p$ and $q$ satisfy
\begin{equation}\label{eq:PT.HM3p.DE323}\begin{array}{l}
\ds t\deriv{q}{t}=2q^2p-q^2-\th_0q+t,\\[7.5pt]
\ds t\deriv{p}{t}=-2qp^2+2qp+\th_0p-\tfrac12(\th_0+\th_{\infty}),
\end{array}\end{equation} 
see Okamoto \cite{refOkamoto80a,refOkamoto80b,refOkamotoPIII}.
Eliminating $p$ in (\ref{eq:PT.HM3p.DE323}) then $q=u$ satisfies \PIIIi\ (\ref{eq:PT.DE.PIIIi}) with parameters  $(A,B)=(-2\th_{\infty},2(\th_0+1))$.
Eliminating $q$ in (\ref{eq:PT.HM3p.DE323}) then $p$ satisfies
\begin{eqnarray}\deriv[2]{p}t  &= \frac12\left(\frac{1}{p}+ \frac{1}{p-1}\right)  \left(\deriv{p}t\right) ^{2}-\frac1t\deriv{p}t-\frac {2p(p-1)}{t}
\nonumber\\&\qquad
+\frac{1}{8t^2}\left\{4\th_0\th_{\infty} -\frac{(\th_0+\th_{\infty})^2}{p}-\frac{(\th_0-\th_{\infty})^2}{p-1}\right\}.\label{eq:PT.HM3p.DE33a}
\end{eqnarray}
Making the transformation $p(t)=1/[1-w(z)]$, with $z=t$, 
in (\ref{eq:PT.HM3p.DE33a}) yields 
\begin{eqnarray}
\deriv[2]{w}{z} &= \left(\frac{1}{2w} + \frac{1}{w-1}\right)\! \left(\deriv{w}{z} \right)^{2} - \frac{1}{z} \deriv{w}{z} \nonumber\\&\qquad
+\frac{(w-1)^2}{z^2}\left\{\frac{(\th_0+\th_{\infty})^2w}8 -\frac{(\th_0-\th_{\infty})^2}{8w}\right\}- \frac{2 w}{z},
\label{eq:PT.DE.PV0}\end{eqnarray} 
which is \PV\ (\ref{eq:PT.DE.PV}) with parameters
$(A,B,C,D)=\big(\tfrac18(\th_0+\th_{\infty})^2, -\tfrac18(\th_0-\th_{\infty})^2,-2,0\big)$,
illustrating the well-known relationship between \PIII\ and \PV\ with $\delta=0$, cf.~\cite[\S34]{refGLS02}. 
\hide{Alternatively, making the transformation $p(t)=1/[1-w(z)]$, with $z=2\sqrt{t}$, in (\ref{eq:PT.HM3p.DE33a}) yields 
\begin{eqnarray}
\deriv[2]{w}{z} &= \left(\frac{1}{2w} + \frac{1}{w-1}\right)\! \left(\deriv{w}{z} \right)^{2} - \frac{1}{z} \deriv{w}{z} \nonumber\\&\qquad
+\frac{(w-1)^2}{2z^2w}\left\{(\th_0+\th_{\infty})^2w^2 -(\th_0-\th_{\infty})^2\right\}-2w,
\label{eq:PT.DE.PVa}\end{eqnarray} 
which is \PV\ (\ref{eq:PT.DE.PV}) with 
$(\a,\b,\gamma,\delta)=\big(\tfrac12(\th_0+\th_{\infty})^2, -\tfrac12(\th_0-\th_{\infty})^2,0,-2\big)$.}

The second-order, second-degree equation satisfied by the Hamiltonian function is given in the following theorem.

\begin{theorem}{The Hamiltonian function
\begin{equation*}\label{eq:PT.HM3p.DE33b}
\sIII(t;\th_0,\th_{\infty})=t\HIIIi(q,p,t)-\tfrac12t+\tfrac14\th_0^2,
\end{equation*} 
with $\HIIIi(q,p,t)$ given by (\ref{eq:PT.HM3p.DE31}), satisfies 
the second-order, second-degree equation 
\begin{equation}\label{eq:PT.HM3p.DE34}\fl
\left(t\deriv[2]{\sIII}{t}\right)^{2}+\left\{4\left(\deriv{\sIII}{t}\right)^{2}-1\right\}
\left(t\deriv{\sIII}{t}-\sIII\right) +\th_0\th_{\infty}\deriv{\sIII}{t}=\tfrac14(\th_0^2+\th_{\infty}^2).
\end{equation} 
Conversely, if $\sIII(t)$ satisfies (\ref{eq:PT.HM3p.DE34}) then the solution of the Hamiltonian system 
(\ref{eq:PT.HM3p.DE323}) is given by
\begin{equation}\label{eq:PT.HM3p.DE35}
q(t)=2\frac{\ds t\deriv[2]{\sIII}{t}-2\th_0\deriv{\sIII}{t} +\th_{\infty}}{\ds 1-4\left(\deriv{\sIII}{t}\right)^2},\qquad p(t)=\deriv{\sIII}{t}+\tfrac12. 
\end{equation}}\end{theorem}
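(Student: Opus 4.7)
The plan is to verify the $\sigma$-equation (\ref{eq:PT.HM3p.DE34}) by computing directly along the Hamiltonian flow (\ref{eq:PT.HM3p.DE323}), and then to invert those intermediate computations to obtain the explicit expressions for $q$ and $p$ in (\ref{eq:PT.HM3p.DE35}).

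First I would differentiate $\sIII$ once in $t$. Since $\HIIIi$ depends on $t$ only through the monomial $tp$, one has $\partial\HIIIi/\partial t=p$, and the non-autonomous Hamiltonian form of (\ref{eq:PT.HM3p.DE323}) yields $d\HIIIi/dt=\partial\HIIIi/\partial t=p$ along any solution. Plugging this into the derivative of the defining relation for $\sIII$ gives an explicit polynomial expression for $\sIII'$ in the variables $p$ and $q$; combining that expression with the defining identity itself ($t\HIIIi=\sIII+\tfrac12 t-\tfrac14\th_0^2$) produces $p$ as a rational function of $t,\sIII,\sIII'$, which is the $p$-formula in (\ref{eq:PT.HM3p.DE35}).

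Next I would differentiate a second time, using (\ref{eq:PT.HM3p.DE323}) to replace $t\,dp/dt$ by $-\partial\HIIIi/\partial q$. The crucial observation is that the resulting formula for $\sIII''$ is \emph{linear} in $q$, with leading coefficient proportional to $p(1-p)$; one can therefore solve immediately for $q$ as a rational function of $p,\sIII',\sIII''$. After substituting the value of $p$ already found and recognising that $4p(1-p)=1-4(\sIII')^2$, the resulting expression collapses to the $q$-formula in (\ref{eq:PT.HM3p.DE35}).

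To obtain (\ref{eq:PT.HM3p.DE34}) itself I would exploit the other algebraic identity at hand: the defining relation $t\HIIIi-\sIII-\tfrac12 t+\tfrac14\th_0^2=0$ is a polynomial of degree two in $q$ with $p$-dependent coefficients. Substituting the expressions for $p$ and $q$ found above into this quadratic, clearing the denominator $1-4(\sIII')^2$, and simplifying, one eliminates $p$ and $q$ entirely; the second-degree structure of the resulting ODE is inherited directly from the quadratic $q$-dependence of $\HIIIi$. The converse is then a direct substitution argument: given $\sIII$ satisfying (\ref{eq:PT.HM3p.DE34}), define $p,q$ by (\ref{eq:PT.HM3p.DE35}), differentiate, and use (\ref{eq:PT.HM3p.DE34}) (and one further derivative of it) to confirm that (\ref{eq:PT.HM3p.DE323}) holds. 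The main technical obstacle is the bookkeeping in this last elimination: one must check that the denominator $1-4(\sIII')^2$ cancels cleanly against matching factors in the numerator (so that no spurious branches corresponding to $p(1-p)=0$ are introduced) and that the coefficients $\th_0\th_{\infty}$ and $\tfrac14(\th_0^2+\th_{\infty}^2)$ in (\ref{eq:PT.HM3p.DE34}) emerge with precisely the right signs.
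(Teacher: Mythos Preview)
The paper does not give its own proof of this theorem; it simply cites Okamoto \cite{refOkamoto80b,refOkamotoPIII} and Forrester--Witte \cite{refFW02}. Your outline is precisely the standard direct computation those references carry out, so you are effectively reconstructing the cited argument rather than matching or diverging from anything the paper itself supplies.

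One point worth flagging: the theorem as printed has $\sigma = t\,\HIIIi - \tfrac12 t + \tfrac14\th_0^2$, but with that prefactor $t$ your step ``$d\HIIIi/dt = p$ hence $p = \sigma' + \tfrac12$'' does not go through; differentiating $t\HIIIi$ gives $\sigma' = \HIIIi + tp - \tfrac12$, and combining with the defining identity yields $p = (t\sigma' - \sigma + \tfrac14\th_0^2)/t^2$, not $\sigma' + \tfrac12$. The $p$-formula in (\ref{eq:PT.HM3p.DE35}) is consistent with $\sigma = \HIIIi - \tfrac12 t + \tfrac14\th_0^2$ (no factor $t$), which is also how Okamoto writes it and how the analogous $\PV$ statement in the paper is formulated; the $t$ is almost certainly a misprint. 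With that correction, $\sigma' = p - \tfrac12$ follows immediately from $d\HIIIi/dt = \partial\HIIIi/\partial t = p$, and there is no need to ``combine with the defining identity'' at that step. The rest of your plan---differentiating once more, solving the resulting relation (linear in $q$, with leading coefficient $p(1-p)$) for $q$, and then substituting back into the Hamiltonian to eliminate $q,p$---is correct and is exactly the computation in the cited sources.
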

\begin{proof}{See Okamoto \cite{refOkamoto80b,refOkamotoPIII}; see also Forrester and Witte \cite{refFW02}.}\end{proof}

The special function solutions of (\ref{eq:PT.HM3p.DE34}) are given in the following theorem.

\begin{theorem}{Let $\tau_{n,\nu}(t)$ 
be the determinant given by 
\begin{equation}\label{def:Wn}\tau_{n,\nu}(t)=\left|\matrix{ \psi_{\nu} & \delta_{t}(\psi_{\nu}) & \cdots & \delta_{t}^{n-1}(\psi_{\nu})\cr
\delta_{t}(\psi_{\nu}) & \delta_{t}^2(\psi_{\nu}) & \cdots & \delta_{t}^{n}(\psi_{\nu})\cr
\vdots & \vdots & \ddots & \vdots \cr
\delta_{t}^{n-1}(\psi_{\nu}) & \delta_{t}^{n}(\psi_{\nu}) & \cdots & \delta_{t}^{2n-2}(\psi_{\nu})
}\right|,\qquad \delta_{t}\equiv t\deriv{}{t}, \end{equation}
with $\psi_{\nu}(t)$ a solution of (\ref{eq:PT.SF.eq33i}).
Then special function solutions of (\ref{eq:PT.HM3p.DE34}) are given by
\begin{equation}\label{sol:SIIIi}
\sigma(t)= 
t\deriv{}{t}\big(\ln\tau_{n,\nu}(t)\big) +\tfrac12\ep_1\ep_2t+\tfrac14{\nu}^{2}+\tfrac12n(1-\ep_1\nu)-\tfrac14{n}^{2},\end{equation} 
for the parameters $(\th_0,\th_{\infty})=(\nu+n,\ep_1\ep_2(\nu-n))$.}\end{theorem}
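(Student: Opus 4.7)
The plan is to verify (\ref{sol:SIIIi}) via the Hirota bilinear $\tau$-function machinery and induction on $n$. Setting
\begin{equation*}
\sigma_n(t) := \delta_{t}\ln\tau_{n,\nu}(t) + \tfrac12\ep_1\ep_2 t + \tfrac14\nu^2 + \tfrac12 n(1-\ep_1\nu) - \tfrac14 n^2,
\end{equation*}
the task reduces to showing that $\sigma_n$ satisfies (\ref{eq:PT.HM3p.DE34}) with $(\th_0,\th_\infty)=(\nu+n,\ep_1\ep_2(\nu-n))$.

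First I would dispatch the base cases. At $n=0$ we have $\tau_{0,\nu}\equiv 1$, so $\sigma_0=\tfrac12\ep_1\ep_2 t+\tfrac14\nu^2$; substitution into (\ref{eq:PT.HM3p.DE34}) is an immediate algebraic check using $4(\sigma_0')^2-1=0$ and $\th_0\th_\infty=\ep_1\ep_2\nu^2$. At $n=1$, $\tau_{1,\nu}=\psi_\nu$; using the linear equation (\ref{eq:PT.SF.eq33i}) to eliminate $\delta_{t}^2\psi_\nu$ converts the $\sigma$-equation into a first-order identity for $\delta_{t}\ln\psi_\nu$ which follows from the Riccati equation (\ref{eq:PT.SF.eq31i}) for the corresponding $u=q$-solution of the Hamiltonian system (\ref{eq:PT.HM3p.DE323}).

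For the induction step the key ingredient is a Toda-type bilinear identity
\begin{equation*}
\tau_{n+1,\nu}\,\tau_{n-1,\nu} = \tau_{n,\nu}\,\delta_{t}^2\tau_{n,\nu} - \bigl(\delta_{t}\tau_{n,\nu}\bigr)^2 + R_n(\tau_{n,\nu},t),
\end{equation*}
obtained by applying the Jacobi/Sylvester determinant identity to the Wronskian (\ref{def:Wn}) and using (\ref{eq:PT.SF.eq33i}) to reduce the higher-order entries $\delta_{t}^{k}\psi_\nu$ to linear combinations of $\psi_\nu$ and $\delta_{t}\psi_\nu$ with polynomial-in-$t$ coefficients. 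Taking logarithmic derivatives and re-expressing in terms of $\sigma_n$, this bilinear relation is equivalent to the statement that (\ref{eq:PT.HM3p.DE34}) propagates under the Okamoto \bt\ of $\PIIIi$ that sends $(\th_0,\th_\infty)\mapsto(\th_0+1,\th_\infty-\ep_1\ep_2)$, closing the induction.

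The main obstacle I anticipate is the bookkeeping of the four sign combinations $(\ep_1,\ep_2)\in\{\pm 1\}^2$: each produces a different fundamental system in (\ref{eq:PT.SF.eq33isol}) and a slightly different direction for the Bäcklund shift, and the additive constant $\tfrac12 n(1-\ep_1\nu)-\tfrac14 n^2$ in (\ref{sol:SIIIi}) is precisely what compensates for these differences. To avoid a case-by-case verification I would work throughout at the level of the sign-independent linear equation (\ref{eq:PT.SF.eq33i}), defer the explicit choice of $\psi_\nu$ from (\ref{eq:PT.SF.eq33isol}) to the very end, and appeal to uniqueness of the $\tau$-function within a Bäcklund orbit to finish.
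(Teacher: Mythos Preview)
The paper does not actually prove this theorem: its entire proof is a reference to Okamoto \cite{refOkamotoPIII} and Forrester--Witte \cite{refFW02}. Your proposal is essentially a sketch of Okamoto's method (seed Riccati solution, $\tau$-functions, \bt s), so in spirit it agrees with what the cited literature does, and there is nothing in the paper itself to compare against.

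Two comments on the sketch. First, the bilinear identity you write should have no extra term $R_n$, and (\ref{eq:PT.SF.eq33i}) plays no role in it: the Jacobi/Sylvester identity applied to the Hankel--Wronskian (\ref{def:Wn}) gives exactly
\[
\tau_{n+1,\nu}\,\tau_{n-1,\nu}=\tau_{n,\nu}\,\delta_t^2\tau_{n,\nu}-(\delta_t\tau_{n,\nu})^2,
\]
i.e.\ the Toda equation in the form of Theorem~\ref{thm32}. This is a purely determinantal identity, independent of the differential equation satisfied by $\psi_\nu$.

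Second, and more substantively, the induction step as you state it has a gap. The Toda relation by itself does not imply that the $\sigma$-equation (\ref{eq:PT.HM3p.DE34}) propagates from $n$ to $n+1$: knowing $\delta_t^2\ln\tau_n=\tau_{n+1}\tau_{n-1}/\tau_n^2$ tells you how $\sigma_{n+1}-\sigma_n$ is related to $\sigma_n-\sigma_{n-1}$, but not that $\sigma_{n+1}$ satisfies the shifted $\sigma$-equation. What is actually needed (and what Okamoto does) is the explicit \bt\ at the level of the Hamiltonian system (\ref{eq:PT.HM3p.DE323}): given a solution $(q_n,p_n)$ with parameters $(\th_0,\th_\infty)$, construct $(q_{n+1},p_{n+1})$ with $(\th_0+1,\th_\infty-\ep_1\ep_2)$ by a rational formula, show that the corresponding Hamiltonians differ by $\delta_t\ln(\tau_{n+1}/\tau_n)$ plus the constant shift in (\ref{sol:SIIIi}), and only then identify the iterated $\tau$-function with the Wronskian. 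The Toda equation is a consistency check of this construction, not its engine.
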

\begin{proof}{See Okamoto \cite{refOkamotoPIII}; see also Forrester and Witte \cite{refFW02}.}\end{proof}

The determinant $\tau_{n,\nu}(t)$ given by (\ref{def:Wn}) is often called a ``$\tau$-function", see \cite{refOkamotoPIII}.
\hide{Further $\tau_{n,\nu}(t)$ given by (\ref{def:Wn}), where $\psi_{\nu}(t)$ is given by (\ref{eq:PT.SF.eq33isol}) with $C_1=0$ or $C_2=0$, can be written as a Toeplitz determinant.
\begin{lemma}{If $\W_n(\psi)$ is defined by 
\[\W_n(\psi)=\left|\matrix{ \psi & \delta_{t}(\psi) & \cdots & \delta_{t}^{n-1}(\psi)\cr
\delta_{t}(\psi) & \delta_{t}^2(\psi) & \cdots & \delta_{t}^{n-1}(\psi)\cr
\vdots & \vdots & \ddots & \vdots \cr
\delta_{t}^{n-1}(\psi) & \delta_{t}^{n}(\psi) & \cdots & \delta_{t}^{2n-2}(\psi)
}\right|,\qquad \delta_{t}\equiv t\deriv{}{t}, \] then
 \begin{eqnarray*}
  \W_n\Big({t}^{\pm\nu/2}\BesselJ{\nu}\big(2\sqrt{t}\big)\Big)&=(-t)^{n(n-1)/2} 
 \det\Big[t^{\pm\nu/2}\BesselJ{\nu+j-k}\big(2\sqrt{t}\big)\Big]_{j,k=0}^{n-1},\\
 \W_n\Big({t}^{\pm\nu/2}\BesselI{\nu}\big(2\sqrt{t}\big)\Big)&={t}^{n(n-1)/2}
 \det\Big[t^{\pm\nu/2}\BesselI{\nu+j-k}\big(2\sqrt{t}\big)\Big]_{j,k=0}^{n-1},\\
 \W_n\Big({t}^{\pm\nu/2}\BesselK{\nu}\big(2\sqrt{t}\big)\Big)&={t}^{n(n-1)/2}
 \det\Big[t^{\pm\nu/2}\BesselK{\nu+j-k}\big(2\sqrt{t}\big)\Big]_{j,k=0}^{n-1}.
 \end{eqnarray*}}\end{lemma}
\begin{proof}{These are easily proved using Bessel function identities, cf.~\cite[\S10.6]{refDLMF}.}\end{proof}}%

\subsection{\label{ssec:pvsols}Special functions solutions of the fifth \peq.}
In the generic case when $D\not=0$ in \PV\ (\ref{eq:PT.DE.PV}), then we set $D=-\tfrac12$, without loss of generality, so in the sequel we consider the equation  
\begin{eqnarray}
\deriv[2]{w}{z} &= \left(\frac{1}{2w} + \frac{1}{w-1}\right)\!
\left(\deriv{w}{z} \right)^{2} - \frac{1}{z} \deriv{w}{z} + 
\frac{(w-1)^2}{z^2}\left(A w +\frac{B}{w}\right)\nonumber\\ &\qquad+ 
\frac{C w}{z} - \frac{w(w+1)}{2(w-1)}.\label{eq:PT.DE.PVgen}\end{eqnarray}
Special function solutions of (\ref{eq:PT.DE.PVgen}) are expressed in terms of confluent hypergeometric functions (equivalently Kummer functions or Whittaker functions), see \cite{refOkamotoPV,refMasuda04,refWat}. 

\begin{theorem}\label{thm:p5sf} \Eref{eq:PT.DE.PVgen} then has solutions expressible in terms of Kummer functions if and only if 
\begin{equation} \label{eq:PT.SF.eq50a} a+b+\ep_3C=2N+1,
\end{equation} or
\begin{equation} \label{eq:PT.SF.eq50b} (a-N)(b-N)=0,
\end{equation} where $N\in\N$, $a=\ep_1\sqrt{2A}$ and
$b=\ep_2\sqrt{-2B}$, with $\ep_j=\pm1$, $j=1,2,3$,
independently.\end{theorem}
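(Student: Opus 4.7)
The plan is to follow the standard Riccati-plus-B\"acklund paradigm for classical solutions of \p\ equations. First I would look for a one-parameter Riccati reduction of (\ref{eq:PT.DE.PVgen}): substitute the ansatz $z\,\d w/\d z = c_2(z) w^2 + c_1(z) w + c_0(z)$ into \PV, eliminate $w'$ and $w''$, and match the resulting polynomial identity in $w$. This determines the coefficients $c_j(z)$ explicitly and produces a single algebraic constraint on $(A,B,C)$ under which the ansatz is consistent. With the sign conventions $a = \ep_1\sqrt{2A}$ and $b = \ep_2\sqrt{-2B}$ selecting the appropriate square-root branches, the constraint collapses to $a + b + \ep_3 C = 1$, which is the $N=0$ case of (\ref{eq:PT.SF.eq50a}). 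The logarithmic-derivative substitution $w = 1 - (z/\gamma)\,\psi'/\psi$ then linearises the Riccati equation to a second-order linear ODE that, after rescaling $z$, is Kummer's confluent hypergeometric equation; its general solution is a linear combination of $\KummerM(\alpha,c,z)$ and $\KummerU(\alpha,c,z)$ with $\alpha$ and $c$ affine in $(a,b)$, supplying the seed Kummer-function solution.

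Next I would propagate the seed by the Okamoto \bts\ for \PV. These act on the parameters through the extended affine Weyl group of type $A_3^{(1)}$ (see \cite{refOkamotoPV,refMasuda04}), and a suitable composition of generators shifts the combination $a + b + \ep_3 C$ by $2$; iterating it $N$ times on the $N=0$ seed produces solutions on every hyperplane in the family (\ref{eq:PT.SF.eq50a}). Since \bts\ are rational in $w$ and $w'$, the iterated solutions remain expressible in terms of Kummer functions, and in fact are most conveniently written as Wronskians, which is precisely the form exploited in \S\ref{sec:meix}. The second family (\ref{eq:PT.SF.eq50b}) is handled analogously, starting from a different fixed element of the Weyl-group action: this corresponds, after the linearising substitution, to the truncation of $\KummerM(-N,c,z)$ into a generalised Laguerre polynomial and so produces the classical polynomial-in-Kummer solutions on the loci $a = N$ or $b = N$.

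For the converse direction, I would not rederive the key input but invoke it: the Umemura--Watanabe irreducibility analysis \cite{refWat} establishes that one-parameter families of classical solutions of \PV\ exist exactly on the B\"acklund orbits of the two seed Riccati loci, and these orbits are precisely the unions of hyperplanes (\ref{eq:PT.SF.eq50a}) and (\ref{eq:PT.SF.eq50b}). The principal obstacle of the proof is therefore the converse, which rests on the full affine Weyl group symmetry and the irreducibility theorem for \PV; by contrast, the forward direction reduces to a direct (if slightly tedious) calculation once the correct Riccati ansatz has been identified.
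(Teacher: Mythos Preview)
Your outline is correct and follows the standard Riccati-plus-B\"acklund argument for special-function solutions of \PV. Note, however, that the paper does not actually supply a proof of this theorem: its ``proof'' consists solely of the citations ``See Okamoto \cite{refOkamotoPV}, Masuda \cite{refMasuda04} and Watanabe \cite{refWat}; also \cite[\S40]{refGLS02}.'' What you have written is in effect a sketch of the content of those references --- the Riccati seed, the affine Weyl group action propagating it, and the Umemura--Watanabe irreducibility result for the converse --- so your proposal goes well beyond what the paper itself provides, while remaining faithful to the approach in the cited literature.
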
 

\begin{proof}{See Okamoto \cite{refOkamotoPV}, Masuda \cite{refMasuda04} and Watanabe \cite{refWat}; also \cite[\S40]{refGLS02}.}\end{proof}

In the case when $N=0$ in (\ref{eq:PT.SF.eq50a}), then (\ref{eq:PT.DE.PVgen}) has solutions in terms of the associated Riccati equation
\begin{equation}\label{eq:PT.SF.eq51} z\deriv{w}{z}=aw^2+(b-a+\ep_3z)w-b.\end{equation} 
If $a\not=0$, then (\ref{eq:PT.SF.eq51}) has the solution 
\begin{equation*}\label{eq:PT.SF.eq52}
w(z)=-\frac{z}{a}\,\deriv{}{z} \ln\ph(z), 
\end{equation*} where $\ph(z)$ satisfies
\begin{equation*}\label{eq:PT.SF.eq53}
{z}^{2}\deriv[2]{\ph}{z}  + z (a -b+1-\ep_3 z ) \deriv{\ph}{z}  -ab\ph=0,
\end{equation*} 
which has solution
\begin{equation*}\label{eq:PT.SF.eq54}\fl
\ph(z)=\cases{z^b\left\{C_1 M(b, 1+a+b, z)+C_2U(b, 1+a+b, z)\right\},&$\mbox{\rm if}\enskip \ep_3=1,$\\ 
z^b\e^{-z}\left\{C_1 M(1+a, 1+a+b, z)+C_2U(1+a, 1+a+b, z)\right\},&$\mbox{\rm if}\enskip \ep_3=-1,$}
\end{equation*}  
with $C_1$ and $C_2$ arbitrary constants, and where $M(\a,\b, z)$ and $U(\a,\b, z)$ are Kummer functions. \hide{, or equivalently
\begin{equation*}
\fl\ph(z)=\cases{
{z^{-(a-b+1)/2}}\,\e^{z/2}\left\{{\widetilde{C}_1\WhitM{\k}{\mu}(z)+\widetilde{C}_2\WhitW{\k}{\mu}(z)}\right\},&$\mbox{\rm if}\enskip \ep_3=1$,\\
{z^{-(a-b+1)/2}}\,\e^{-z/2}\left\{{\widetilde{C}_1\WhitM{-\k}{\mu}(z)+\widetilde{C}_2\WhitW{-\k}{\mu}(z)}\right\},\quad&$\mbox{\rm if}\enskip \ep_3=-1$,}
\end{equation*} with $\widetilde{C}_1$ and $\widetilde{C}_2$ arbitrary constants, 
where $\WhitM{\k}{\mu}(z)$ and $\WhitW{\k}{\mu}(z)$ are Whittaker functions, 
$\k=\tfrac12(a-b+1)$ and $\mu=\tfrac12(a+b)$. }%

\subsection{\label{ssec:pvham}Hamiltonian structure for the fifth \peq.}
The Hamiltonian associated with (\ref{eq:PT.DE.PVgen}) is 
\begin{eqnarray}\fl{\HV}(q,p,z) 
= q(q-1)^2p^2  &-\{(b+\th) q^2-(2b+\th-z)q+b\}p 
-\tfrac14\{a^2-(b+\th)^2\}q,\label{ham5}\end{eqnarray} with $a$, $b$ and $\th$ parameters, 
where $p$ and $q$ satisfy
\begin{equation}\label{eq:PT.HM.DE523}\begin{array}{l}
\fl\ds z\deriv{q}{z}=2q(q-1)^2p-(b+\th) q^2+(2b+\th-z)q-b,\\[7.5pt] 
\fl\ds z\deriv{p}{z}=-(3q-1)(q-1)p^2+2(b+\th) qp-(2b+\th-z)p+\tfrac14\{a^2-(b+\th)^2\},
\end{array}\end{equation}
see Jimbo and Miwa \cite{refJM} and Okamoto \cite{refOkamoto80a,refOkamoto80b,refOkamotoPV}.
Eliminating $p$ then $q=w$ satisfies (\ref{eq:PT.DE.PVgen}) with parameters $(A,B,C)=(\tfrac12a^2,-\tfrac12b^2,-\th-1)$.

The second-order, second-degree equation satisfied by the Hamiltonian function is given in the following theorem.

\begin{theorem}{The Hamiltonian function
\begin{equation}\sV(z;a,b,\th)=\HV(q,p,z)+\tfrac14(2b+\th)z-\tfrac18(2b+\th)^2,\end{equation}  
with $\HV(q,p,z)$ given by (\ref{ham5}),
satisfies the second-order, second-degree equation 
\begin{equation}\left(z\deriv[2]{\sV}{z}\right)^{2} -\left\{2\left(\deriv{\sV}{z}\right)^{2}
-z\deriv{\sV}{z}+\sV\right\}^2
+4\prod_{j=1}^4\left(\deriv{\sV}{z}+\k_j\right)=0,\label{eq:SV}\end{equation}
for the parameters
\begin{equation}\fl\k_1=\tfrac14\th+\tfrac12a,\quad\k_2=\tfrac14\th-\tfrac12a,\quad\k_3=-\tfrac14\th-\tfrac12b,\quad\k_4=-\tfrac14\th+\tfrac12b.\label{SV:params}\end{equation}
Conversely, if $\sV(z;a,b,\th)$ satisfies (\ref{eq:SV}) then the solution of the Hamiltonian system (\ref{eq:PT.HM.DE523}) is given by
\begin{equation}\fl
q(z)  =\frac {\ds z\sV''+2(\sV')^{2}-z\sV' +\sV}{\ds 2(\sV'+\tfrac14\th-\tfrac12a) (\sV' +\tfrac14\th+\tfrac12a)},\quad
p(z)  ={\frac {\ds z\sV''-2(\sV')^{2}+z\sV'-\sV }{\ds 2(\sV'-\tfrac14\th+\tfrac12b)}},\label{eq:PT.HM5.DE35}
\end{equation}where $'=\d/\d z$.
}\end{theorem}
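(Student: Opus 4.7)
{
The plan is to derive both the equation for $\sV$ and the inversion formulas simultaneously by direct calculation from the Hamiltonian system~(\ref{eq:PT.HM.DE523}). The key observation is that $\HV$ in~(\ref{ham5}) depends explicitly on $z$ only through the term $(2b+\th-z)qp$, so along any solution of~(\ref{eq:PT.HM.DE523}) the total derivative collapses to $\d\HV/\d z=\partial_z\HV=-qp$. Combining this with the additive normalization in the definition of $\sV$ gives $\deriv{\sV}{z}=-qp+\tfrac14(2b+\th)$, which provides the first elimination $qp=\tfrac14(2b+\th)-\sV'$.

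Next I would differentiate once more and use~(\ref{eq:PT.HM.DE523}) to substitute for $zq'$ and $zp'$ in $z\sV''=-(zq'\,p+q\,zp')$. After the $p^2$ cross-terms combine and the first elimination is used to remove pure products of $qp$, the resulting relation is linear in $p$ with coefficients polynomial in $q$. Together with the defining identity $\HV(q,p,z)=\sV-\tfrac14(2b+\th)z+\tfrac18(2b+\th)^2$, which is quadratic in $p$ over $\mathbb{Q}[q,z]$, this furnishes a system that can be solved for $q$ and $p$ individually as rational functions of $\sV$, $\sV'$, $\sV''$ and $z$. Carrying out this elimination produces the formulas~(\ref{eq:PT.HM5.DE35}), proving the converse direction.

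For the forward direction, substituting these rational expressions for $q$ and $p$ back into either of the polynomial relations yields a scalar identity in $\sV$, $\sV'$, $\sV''$ and $z$ which, after clearing denominators, is quadratic in $\sV''$ and coincides with~(\ref{eq:SV}). The role of the additive normalization $\tfrac14(2b+\th)z-\tfrac18(2b+\th)^2$ in the definition of $\sV$ is precisely to arrange that the quartic in $\sV'$ on the right-hand side factors as the product of the two discriminants $(\sV'+\tfrac14\th)^2-\tfrac14 a^2$ and $(\sV'-\tfrac14\th)^2-\tfrac14 b^2$ inherited from the quadratic-in-$p$ structure of $\HV$, giving $4\prod_{j=1}^{4}(\sV'+\k_j)$ with the Okamoto parameters $\k_j$ of~(\ref{SV:params}).

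The main obstacle is the bookkeeping required to verify that the quartic produced by the elimination matches the factored form on the left-hand side of~(\ref{eq:SV}), including the correct signs for the terms $(z\sV'')^2$ and $\{2(\sV')^2-z\sV'+\sV\}^2$. The factorization is not automatic: one must check that exactly the four combinations $\tfrac14\th\pm\tfrac12 a$ and $-\tfrac14\th\pm\tfrac12 b$ arise as roots of the quartic. This is a routine but unavoidable polynomial identity whose explicit verification in this normalization appears in Okamoto~\cite{refOkamotoPV} and Jimbo--Miwa~\cite{refJM}.
}
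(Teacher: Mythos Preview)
Your outline is correct and is in fact the standard derivation: compute $\sV'$ from $\partial_z\HV=-qp$, differentiate again using the Hamiltonian flow to obtain a relation linear in $p$, and eliminate $q$, $p$ against the quadratic relation $\HV=\sV-\tfrac14(2b+\th)z+\tfrac18(2b+\th)^2$ to produce both the $\sigma$-equation and the inversion formulas. The paper itself does not give a proof here at all; it simply refers the reader to Jimbo--Miwa~\cite{refJM} and Okamoto~\cite{refOkamoto80b,refOkamotoPV}, whose arguments are precisely the calculation you have sketched. So your proposal is not a different route but rather an explicit version of the cited one, and the only caveat is the one you already flag: the final identification of the quartic in $\sV'$ with $4\prod_j(\sV'+\k_j)$ is a polynomial identity that must be checked by hand (or by appeal to those references), not something that falls out structurally.
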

\begin{proof}{See Jimbo and Miwa \cite{refJM} and Okamoto \cite{refOkamoto80b,refOkamotoPV}.}\end{proof}

The special function solutions of (\ref{eq:SV}) are given in the following theorem.

\begin{theorem}{Let $\W_n(\psi)$ be the determinant given by
\begin{equation}\label{def:Wnn}
\W_n(\psi)=\left|\matrix{ \psi & \delta_{z}(\psi) & \cdots & \delta_{z}^{n-1}(\psi)\cr
\delta_{z}(\psi) & \delta_{z}^2(\psi) & \cdots & \delta_{z}^{n}(\psi)\cr
\vdots & \vdots & \ddots & \vdots \cr
\delta_{z}^{n-1}(\psi) & \delta_{z}^{n}(\psi) & \cdots & \delta_{z}^{2n-2}(\psi)
}\right|,\qquad \delta_{z}\equiv z\deriv{}{z}, \end{equation}
and define $\ph_{\a,\b}(z)$ by
\begin{equation*}\label{def:phi}
\ph_{\a,\b}(z)=C_1M(\a,\b,z)+C_2U(\a,\b,z),\end{equation*}
with $C_1$ and $C_2$ arbitrary constants, and where $M(\a,\b,z)$ and $U(\a,\b,z)$ are Kummer functions.
Then special function solutions of (\ref{eq:SV}) are given by 
\numparts\begin{eqnarray} \sigma(z) 
&=z\deriv{}{z}\big(\ln\W_n(\ph_{\a,\b})\big)-\tfrac14(3n+2\a-\b-1)z \nonumber \\ &\qquad 
-\tfrac58n^2-\tfrac14(2\a-3\b-1)n-\tfrac18(2\a-\b-1)^2,  \label{sol:SVi} \\ 
\sigma(z) 
&=z\deriv{}{z}\big(\ln\W_n(z^{\b}\ph_{\a,\b})\big)-\tfrac14(3n+2\a-\b-1)z \nonumber\\ &\qquad 
-\tfrac58n^2-\tfrac14(2\a+\b-1)n-\tfrac18(2\a-\b-1)^2,\label{sol:SVii}
\end{eqnarray}
for the parameters
\begin{equation}\begin{array}{l@{\qquad}l}
\k_1=\tfrac14(2\a-\b-n-1),&\k_3=\tfrac14(2\a-\b+3n-1),\\[5pt]
\k_2=-\tfrac14(2\a+\b+n-3),&\k_4=-\tfrac14(2\a-3\b+n+1),
\end{array}\end{equation}\endnumparts
and
\numparts\begin{eqnarray} \sigma(z) 
&=z\deriv{}{z}\big(\ln\W_n(z^{\b}\e^{-z}\ph_{\a,\b})\big)+\tfrac14(3n-2\a+\b-1)z \nonumber\\ &\qquad 
-\tfrac58n^2+\tfrac14(2\a-3\b+1)n-\tfrac18(2\a-\b+1)^2,\label{sol:SViii} \end{eqnarray}
for the parameters
\begin{equation}\begin{array}{l@{\qquad}l}
\k_1=\tfrac14(2\a-\b+n+1),&\k_3=\tfrac14(2\a-\b-3n+1),\\[5pt]
\k_2=-\tfrac14(2\a+\b-n-1),&\k_4=-\tfrac14(2\a-3\b-n+3).\end{array}
\end{equation}\endnumparts
}\end{theorem}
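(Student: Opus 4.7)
The plan is Okamoto's classical approach: identify a seed solution of (\ref{eq:SV}) at the $n=0$ level from the Riccati reduction (\ref{eq:PT.SF.eq51}), and then generate the full hierarchy $n\in\N$ by iterating the Bäcklund transformations of \PV, which in $\tau$-function form becomes a Toda-type bilinear chain whose general solution is a Wronskian determinant.

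First, for the seed case $n=0$, linearize the Riccati equation (\ref{eq:PT.SF.eq51}) via $w=-(z/a)\deriv{}{z}\ln\ph$, yielding Kummer's equation $z\deriv[2]{\ph}{z}+(\b-z)\deriv{\ph}{z}-\a\ph=0$ for appropriate $\a,\b$ tied to $a$, $b$, $\th$. Setting $\sigma_0(z)=z\deriv{}{z}\ln\ph_{\a,\b}+(\mbox{affine in }z)$ and repeatedly using Kummer's equation to eliminate $\delta_{z}^{2}\ph$ reduces the verification that $\sigma_0$ solves (\ref{eq:SV}) to a polynomial identity in $\ph$ and $\delta_{z}\ph$. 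This handles (\ref{sol:SVi}) at $n=0$ and pins down the relation between $(\a,\b)$ and the seed Hamiltonian parameters $(a,b,\th)$.

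Second, appeal to Okamoto's result that \PV\ admits a group of Bäcklund transformations generating an affine Weyl group of type $A_3^{(1)}$ on the parameter space $(\k_1,\ldots,\k_4)$; a translation element of this group increments $n\to n+1$ in the list (\ref{SV:params}). In $\tau$-function language this translation encodes a bilinear Toda equation of the schematic form
\begin{equation*}
\tau_{n+1}\tau_{n-1}=\tau_n\delta_{z}^{2}(\tau_n)-[\delta_{z}(\tau_n)]^{2}+(\mbox{affine corrections}),
\end{equation*}
whose general solution, with initial datum $\tau_0=\ph_{\a,\b}$ annihilated by Kummer's equation, is by repeated application of the Jacobi (Desnanot--Sylvester) determinant identity precisely $\W_n(\ph_{\a,\b})$ from (\ref{def:Wnn}). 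Combined with the explicit Hamiltonian formulae (\ref{eq:PT.HM5.DE35}) that reconstruct $(q,p)$ from $\sigma$, this produces (\ref{sol:SVi}). Formulae (\ref{sol:SVii}) and (\ref{sol:SViii}) arise from the remaining two Riccati reductions of \PV, related to the first by the Kummer symmetries $M(\a,\b,z)\mapsto z^{1-\b}M(\a-\b+1,2-\b,z)$ and $M(\a,\b,z)=\e^{z}M(\b-\a,\b,-z)$. Running the same Toda chain on the gauged seeds $z^{\b}\ph_{\a,\b}$ and $z^{\b}\e^{-z}\ph_{\a,\b}$ reproduces the two remaining Wronskian families; the logarithmic derivative of the gauge factor contributes the extra linear-in-$z$ shift of $\sigma$, and the $\k_j$ quadruples are obtained by tracking the parameter shift at each Bäcklund step.

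The hard part will be the second paragraph above: matching the bilinear Toda recursion produced by the Bäcklund translation with the Wronskian recursion of (\ref{def:Wnn}), with the correct affine normalization so that no spurious power-law or exponential prefactors are introduced into $\W_n$. The seed verification and the parameter bookkeeping in the remaining steps are routine but demand careful attention to sign and shift conventions inherited from the affine Weyl group action.
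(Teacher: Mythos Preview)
Your proposal is correct and in fact supplies considerably more detail than the paper does: the paper's ``proof'' consists of a single sentence, deferring the result to Forrester and Witte \cite{refFW02} and Okamoto \cite{refOkamotoPV}. The strategy you outline --- verify the $n=0$ seed via the Riccati/Kummer reduction, then propagate along the $A_3^{(1)}$ Weyl-group translation using the bilinear Toda recursion and the Desnanot--Jacobi identity to obtain the Wronskian form, with the gauged seeds $z^{\b}\ph_{\a,\b}$ and $z^{\b}\e^{-z}\ph_{\a,\b}$ handled by Kummer symmetries --- is exactly the machinery in those references, so your approach and the paper's (by citation) coincide.

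One small caution on bookkeeping: you will need to be careful that the translation element you pick in the affine Weyl group really shifts the quadruple $(\k_1,\k_2,\k_3,\k_4)$ in the pattern claimed (note the asymmetric $-n,-n,+3n,-n$ distribution across the four $\k_j$ in the first family versus $+n,+n,-3n,+n$ in the second), and that the ``affine corrections'' in your Toda recursion are tracked consistently with the normalization $\sV=\HV+\tfrac14(2b+\th)z-\tfrac18(2b+\th)^2$. These are exactly the points you already flagged as the hard part, and they are where errors typically creep in; but there is no conceptual gap in your plan.
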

\begin{proof}{This result can be inferred from the work of Forrester and Witte \cite{refFW02} and Okamoto \cite{refOkamotoPV}.}\end{proof}


\section{\label{sec:op}Orthonormal polynomials}
\subsection{\label{ssec:cop}Continuous orthonormal polynomials}
Let $p_n(x)$, for $n\in\N$, be the orthonormal polynomial of degree $n$ in $x$ with respect to a positive weight $\w(x)$ 
on $(a,b)$, which a finite or infinite open interval in $\R$, such that
\begin{equation*}
\intS p_m(x)p_n(x)\,\w(x)\,\d x =\delta_{m,n},
\end{equation*}
with $\delta_{m,n}$ the Kronekar delta. 
One of the most important properties of orthogonal polynomials  is that they satisfy a three-term recurrence relationship of the form
\begin{equation}\label{eq:rr}
xp_n(x)=a_{n+1}p_{n+1}(x)+b_np_n(x)+a_np_{n-1}(x),
\end{equation}
where the coefficients $a_n$ and $b_n$ are given by the integrals
\begin{equation*}
a_n=\int_a^b  xp_n(x)p_{n-1}(x)\,\w(x)\,\d x,\qquad b_n=
\int_a^b xp_n^2(x)\,\w(x)\,\d x,\end{equation*}
with $p_{-1}(x)=0$. The coefficients in the recurrence relationship (\ref{eq:rr}) can also be expressed in terms of determinants whose coefficients are given in terms of the moments associated with the weight $\w(x)$. Specifically, the coefficients $a_n$ and $b_n$ in the recurrence relation (\ref{eq:rr}) are given by
\begin{equation}\label{eq:anbn}
a_n^2 = \frac{\Delta_{n+1}\Delta_{n-1}}{\Delta_{n}^2},\qquad b_n = \frac{\widetilde{\Delta}_{n+1}}{\Delta_{n+1}}-\frac{\widetilde{\Delta}_n}{\Delta_{n}},\end{equation}
where the determinants $\Delta_n$ and $\widetilde{\Delta}_n$ are given by
\begin{equation}\label{eq:dets} \fl 
\Delta_n=\left|\matrix{\mu_0 & \mu_1 & \ldots & \mu_{n-1}\cr
\mu_1 & \mu_2 & \ldots & \mu_{n}\cr
\vdots & \vdots & \ddots & \vdots \cr
\mu_{n-1} & \mu_{n} & \ldots & \mu_{2n-2}}\right|,\qquad
\widetilde{\Delta}_n=\left| \matrix{\mu_0 & \mu_1 & \ldots & \mu_{n-2} & \mu_n\cr
\mu_1 & \mu_2 & \ldots & \mu_{n-1}& \mu_{n+1}\cr
\vdots & \vdots & \ddots & \vdots & \vdots \cr
\mu_{n-1} & \mu_{n} & \ldots & \mu_{2n-3}& \mu_{2n-1} }\right|, \end{equation}
for $n=0,1,2,\ldots\ $,
with $\Delta_0=1$, $\Delta_{-1}=0$ and $\widetilde{\Delta}_0=0$, and
$\mu_k$, the $k$th moment, is given by the integral
\begin{equation}\label{eq:moment}
\mu_k=\intS  x^k\w(x)\,\d x
.\end{equation}

A characterization of \textit{classical} orthogonal polynomials (such as Hermite, Laguerre and Jacobi polynomials), is that their weights satisfy the \textit{Pearson equation}
\begin{equation}\label{eq:Pearson}
\deriv{}{x}[\sigma(x)\w(x)]=\tau(x)\w(x),
\end{equation}
where $\sigma(x)$ is a monic polynomial with deg$(\sigma)\leq2$ and $\tau(x)$ is a polynomial with deg$(\tau)=1$, 
cf.~\cite{refAlNod,refBochner,refChihara}. If the weight function $\w(x)$ satisfies the Pearson equation (\ref{eq:Pearson}) with either deg$(\sigma)>2$ or deg$(\tau)>1$, then the orthogonal polynomial is said to be \textit{semi-classical}, cf.~\cite{refHvR,refMaroni}. 

For further information about orthogonal polynomials see, for example, the books by 
Chihara \cite{refChihara}, Ismail \cite{refIsmail} and Szeg\"o \cite{refSzego}.

\subsection{\label{ssec:dop}Discrete orthonormal polynomials}
One can also define orthogonal polynomials on an equidistant lattice, rather than an interval. Consider the discrete orthonormal polynomials $\{p_n(x)\}$, $n=0,1,2,\ldots\ $, with respect to a discrete weight $\w(k)$ on the lattice $\N$
\begin{equation*}\sum_{k=0}^\infty p_n(k)p_m(k)\w(k)= \delta_{m,n},
\end{equation*}
which also satisfy the recurrence relation (\ref{eq:rr}).

The moments $\mu_n$ of the discrete weight $\w(k)$ are given by
\begin{equation*}\mu_n=\sum_{k=0}^\infty  k^n \w(k),\qquad n=0,1,2,\ldots\ ,\end{equation*}
and, as for the continuous orthonormal polynomials in \S\ref{ssec:cop} above, the coefficients in the recurrence relation are given by (\ref{eq:anbn}), with the determinants $\Delta_n$ and $\widetilde{\Delta}_n$ given by (\ref{eq:dets}).

In the special case when the discrete weight has the special form 
\begin{equation*}\w(k)=c(k)z^k,\qquad z>0,\end{equation*}
which is the case for the Charlier polynomials $C_n(k;z)$ and the Meixner polynomials $M_n(k;\a,z)$\ (see \S\ref{ssec:charlier} and \S\ref{ssec:meixner} below, respectively), then 
\begin{equation}\label{def:mun}
\mu_n(z)=\sum_{k=0}^\infty  k^n c(k){z^k} =\dz{z}^n(\mu_0),\qquad \dz{z}\equiv z\deriv{}{z}.\end{equation}
Consequently the determinants $\Delta_n(z)$ and $\widetilde{\Delta}_n(z)$ given by (\ref{eq:dets}) have the form
\begin{eqnarray*}\Delta_n(z)\hide{=\left|\matrix{\mu_0 & \mu_1 & \ldots & \mu_{n-1}\cr
\mu_1 & \mu_2 & \ldots & \mu_{n}\cr
\vdots & \vdots & \ddots & \vdots \cr
\mu_{n-1} & \mu_{n} & \ldots & \mu_{2n-2}}\right|} 
&=\left|\matrix{\mu_0 & \delta_{z}(\mu_0) & \ldots & \delta_{z}^{n-1}(\mu_{0})\cr
 \delta_{z}(\mu_0) & \delta_{z}^2(\mu_0) & \ldots & \delta_{z}^n(\mu_{0})\cr
\vdots & \vdots & \ddots & \vdots \cr
\delta_{z}^{n-1}(\mu_{0}) & \delta_{z}^{n}(\mu_{0}) & \ldots & \delta_{z}^{2n-2}(\mu_{0})}\right|,\\
\widetilde{\Delta}_n(z) &=\left|\matrix{ \mu_0 & \delta_{z}(\mu_0) & \ldots &  \delta_{z}^{n-2}(\mu_{0}) & \delta_{z}^n(\mu_{0})\cr
\delta_{z}(\mu_0) & \delta_{z}^2(\mu_0) & \ldots & \delta_{z}^{n-1}(\mu_{0})& \delta_{z}^{n+1}(\mu_{0})\cr
\vdots & \vdots & \ddots & \vdots & \vdots \cr
\delta_{z}^{n-1}(\mu_{0}) & \delta_{z}^{n}(\mu_{0}) & \ldots & \delta_{z}^{2n-3}(\mu_{0})& \delta_{z}^{2n-1}(\mu_{0})}\right|,\end{eqnarray*}
respectively. Hence we have the following result.

\begin{theorem}{\label{thm31}If the moment $\mu_n(z)$ has the form (\ref{def:mun}),
then the determinants $\Delta_n(z)$ and $\widetilde{\Delta}_n(z)$ can be written in the form
\begin{equation}\Delta_n(z)=\W_n(\mu_0),\qquad
\widetilde{\Delta}_n(z)=
\dz{z}\W_n(\mu_0),\label{DeltaDDelta}\end{equation}
where $\W_n(\psi)$ is defined by (\ref{def:Wnn}).}\end{theorem}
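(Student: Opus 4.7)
\medskip

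\noindent\textbf{Proof plan.} The first identity $\Delta_n(z)=\W_n(\mu_0)$ is immediate from the hypothesis. By (\ref{def:mun}), the entry $\mu_{i+j}$ of the matrix underlying $\Delta_n(z)$ in (\ref{eq:dets}) equals $\delta_z^{i+j}(\mu_0)$, which coincides with the $(i,j)$-entry of the matrix defining $\W_n(\mu_0)$ in (\ref{def:Wnn}). So I would dispatch this identity in one line.

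The second identity $\widetilde\Delta_n(z)=\delta_z\W_n(\mu_0)$ is the substantive part. My approach is to combine two standard observations: (i) the $\W_n$-matrix is symmetric, since its $(i,j)$-entry depends only on $i+j$; and (ii) $\delta_z=z\,\d/\d z$ is a derivation, so the derivative of a determinant is the sum of the determinants obtained by differentiating one row at a time. First, after substituting $\mu_k=\delta_z^k(\mu_0)$ into $\widetilde\Delta_n(z)$ and taking the transpose (which leaves the determinant invariant), the anomaly in $\widetilde\Delta_n$ moves from the last column to the last row: the first $n-1$ rows of the transposed matrix agree with those of the $\W_n(\mu_0)$-matrix, while its last row is
\[
\bigl(\delta_z^{n}(\mu_0),\ \delta_z^{n+1}(\mu_0),\ \ldots,\ \delta_z^{2n-1}(\mu_0)\bigr),
\]
i.e.\ the entry-wise $\delta_z$-image of the last row of the $\W_n(\mu_0)$-matrix.

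Second, by multilinearity in the rows,
\[
\delta_z\W_n(\mu_0)=\sum_{i=0}^{n-1}\det M_i,
\]
where $M_i$ is the $\W_n(\mu_0)$-matrix with its $i$-th row differentiated entry-wise. For $0\le i\le n-2$ the differentiated $i$-th row, namely $(\delta_z^{i+1}(\mu_0),\ldots,\delta_z^{i+n}(\mu_0))$, coincides with the original $(i{+}1)$-th row, so $\det M_i=0$. Only the $i=n-1$ term survives, and by the previous paragraph this is precisely the transposed $\widetilde\Delta_n(z)$, hence equals $\widetilde\Delta_n(z)$.

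I do not anticipate a genuine obstacle: the argument is entirely mechanical once one lines up the indices. The only care needed is in the bookkeeping — matching the column modification in $\widetilde\Delta_n$ with the row modification produced by $\delta_z\W_n$ via the transpose, and being explicit that symmetry of the $\W_n$-matrix is what makes this swap legitimate.
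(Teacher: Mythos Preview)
Your argument is correct. The paper does not actually give a proof of this theorem: it simply displays $\Delta_n(z)$ and $\widetilde\Delta_n(z)$ with each $\mu_k$ replaced by $\delta_z^k(\mu_0)$ and then states the result with the phrase ``Hence we have the following result.'' The first identity is then visibly a tautology, and for the second the paper leaves the verification that the displayed form of $\widetilde\Delta_n(z)$ equals $\delta_z\W_n(\mu_0)$ to the reader. Your row-by-row differentiation argument --- observing that differentiating any row except the last makes it coincide with the next row, so only the last term survives --- is precisely the computation that fills this gap. (You could equally well differentiate column by column and avoid the transpose, since the Hankel structure means the same vanishing occurs; this would match the column-modified form of $\widetilde\Delta_n$ in (\ref{eq:dets}) directly.)
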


Some properties of the recurrence coefficients $a_n(z)$ and $b_n(z)$ are given in the following theorem.

\begin{theorem}{\label{thm32}If the moment $\mu_n(z)$ has the form (\ref{def:mun}),
then the recurrence coefficients $a_n(z)$ and $b_n(z)$ in (\ref{eq:rr}) satisfy the Toda system
\begin{equation}
\dz{z}\big(a_{n}^2\big)=a_{n}^2(b_{n}-b_{n-1}),\qquad \dz{z}\big(b_n\big)=a_{n+1}^2-a_{n}^2.\label{eq:toda}\end{equation}
Further the determinant $\Delta_n(z)$ satisfies the Toda equation
\begin{equation*}\dz{z}^2\big(\ln\Delta_n\big)=\frac{\Delta_{n+1}\Delta_{n-1}}{\Delta_n^2},\qquad n=1,2,\ldots\ .
\end{equation*}
}\end{theorem}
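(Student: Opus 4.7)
The plan is to derive both Toda equations by differentiating the orthonormality relations with respect to $z$, exploiting the assumption $\w(k;z)=c(k)z^k$ -- equivalently, the operator $\dz{z}$ acts on the weight by multiplication by the lattice variable $k$ -- and then to obtain the Toda equation on $\Delta_n$ by a short telescoping argument using Theorem~\ref{thm31}.

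As a first step, I would use Theorem~\ref{thm31} together with (\ref{eq:anbn}) to rewrite the recurrence coefficients as
\begin{equation*}
a_n^2 \;=\; \frac{\Delta_{n+1}\Delta_{n-1}}{\Delta_n^2}, \qquad b_n \;=\; \dz{z}\ln\Delta_{n+1}-\dz{z}\ln\Delta_n.
\end{equation*}
The first Toda equation $\dz{z}(a_n^2)=a_n^2(b_n-b_{n-1})$ then reduces to the telescoping identity $\dz{z}\ln(a_n^2)=\dz{z}\ln\Delta_{n+1}+\dz{z}\ln\Delta_{n-1}-2\dz{z}\ln\Delta_n = b_n - b_{n-1}$ and so is immediate.

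The core step is to establish the evolution formula $\dz{z}p_n=-a_n p_{n-1}-\tfrac12 b_n p_n$. Since $\dz{z}p_n$ is a polynomial in $k$ of degree at most $n$, expand $\dz{z}p_n=\sum_{j=0}^n c_{n,j}p_j$. Differentiating the orthonormality $\sum_k p_n(k)p_m(k)\w(k;z)=\delta_{m,n}$ and using $\dz{z}\w=k\w$ together with the recurrence (\ref{eq:rr}) to evaluate $\sum_k k\,p_np_m\w$ yields $c_{n,j}=0$ for $j<n-1$, $c_{n,n-1}=-a_n$, and, from differentiating the normalization $\sum_k p_n^2\w=1$ (where $\sum_k k\,p_n^2\w=b_n$), $c_{n,n}=-\tfrac12 b_n$. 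Applying $\dz{z}$ to the recurrence $kp_n=a_{n+1}p_{n+1}+b_np_n+a_np_{n-1}$, substituting the evolution formula for $\dz{z}p_{n\pm 1}$ and $\dz{z}p_n$, and re-using the recurrence to eliminate the remaining factors of $k$, matching the coefficient of $p_{n+1}$ recovers the first Toda equation (with index shifted by one) and matching the coefficient of $p_n$ gives the second Toda equation $\dz{z}b_n=a_{n+1}^2-a_n^2$; the coefficients of $p_{n-1}$ and $p_{n-2}$ are then consistency checks.

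For the Toda equation on $\Delta_n$, I would substitute the expressions from the first step into $\dz{z}b_n=a_{n+1}^2-a_n^2$, which gives $\dz{z}^2\ln\Delta_{n+1}-\dz{z}^2\ln\Delta_n=\Delta_{n+2}\Delta_n/\Delta_{n+1}^2-\Delta_{n+1}\Delta_{n-1}/\Delta_n^2$. Hence $F_n:=\dz{z}^2\ln\Delta_n-\Delta_{n+1}\Delta_{n-1}/\Delta_n^2$ is independent of $n$, and the initial values $\Delta_0=1$, $\Delta_{-1}=0$ force $F_0=0$, so $F_n\equiv 0$. The main obstacle is the evolution formula for $\dz{z}p_n$: it is the only place where the structural hypothesis $\w(k;z)=c(k)z^k$ enters essentially, and it requires tracking the degree bound on $\dz{z}p_n$, the identity $\dz{z}\w=k\w$, and the three-term recurrence simultaneously, with the constant $-\tfrac12$ in $c_{n,n}$ traced back to the factor of $2$ from differentiating $p_n^2$ in the normalization.
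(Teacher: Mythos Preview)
Your argument is correct. The evolution formula $\dz{z}p_n=-a_np_{n-1}-\tfrac12 b_np_n$ is derived properly from differentiating the orthonormality relations and using $\dz{z}\w=k\w$; the coefficient matching against the differentiated recurrence then gives both Toda equations exactly as you claim, and the telescoping argument for $\dz{z}^2\ln\Delta_n=\Delta_{n+1}\Delta_{n-1}/\Delta_n^2$ with base case $F_0=0$ is clean.

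The paper itself does not prove this theorem: its ``proof'' consists solely of citations to \cite{refKMNOY,refOkamotoPIII,refSogo} and \cite{refNZ}. So your proposal is strictly more informative than the paper's treatment. What you have written is in fact the standard direct derivation (essentially the Sogo--type argument), and it has the merit of making transparent exactly where the structural hypothesis $\w(k;z)=c(k)z^k$ enters --- namely only through $\dz{z}\w=k\w$ in the evolution formula, as you emphasize. One small observation: your first step already establishes the first Toda equation purely from the determinantal formulae and Theorem~\ref{thm31}, so the subsequent recovery of that same equation from the $p_{n+1}$ coefficient is a consistency check rather than new information; the essential new content from the differentiated recurrence is the second Toda equation.
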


\begin{proof}{See \cite{refKMNOY,refOkamotoPIII,refSogo}; see also \cite{refNZ}. 
}\end{proof}

\begin{theorem}{\label{thm33}The recurrence coefficients $a_n(z)$ and $b_n(z)$ in (\ref{eq:rr}) can be expressed in the form
\begin{equation}a_n^2(z)=\dz{z}^2\big(\ln\W_n(\mu_0)\big),\qquad b_n(z) = 
\dz{z}\left(\ln\frac{\W_{n+1}(\mu_0)}{\W_n(\mu_0)}\right).\label{deq:anbn}\end{equation}}\end{theorem}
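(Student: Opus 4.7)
The plan is to deduce Theorem~\ref{thm33} directly by combining Theorems~\ref{thm31} and~\ref{thm32} with the determinantal expressions (\ref{eq:anbn}) for $a_n$ and $b_n$. Since Theorem~\ref{thm31} identifies $\Delta_n(z)=\W_n(\mu_0)$ and $\widetilde{\Delta}_n(z)=\dz{z}\W_n(\mu_0)$, the whole argument reduces to simple logarithmic manipulations.

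For the $b_n$ formula, I would substitute the identifications from Theorem~\ref{thm31} into the expression $b_n=\widetilde{\Delta}_{n+1}/\Delta_{n+1}-\widetilde{\Delta}_n/\Delta_n$ from (\ref{eq:anbn}). This gives
\begin{equation*}
b_n(z)=\frac{\dz{z}\W_{n+1}(\mu_0)}{\W_{n+1}(\mu_0)}-\frac{\dz{z}\W_n(\mu_0)}{\W_n(\mu_0)}=\dz{z}\big(\ln\W_{n+1}(\mu_0)\big)-\dz{z}\big(\ln\W_n(\mu_0)\big),
\end{equation*}
which on combining the logarithms yields the claimed expression for $b_n(z)$.

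For the $a_n^2$ formula, the cleanest route is to invoke the Toda equation
$\dz{z}^2(\ln\Delta_n)=\Delta_{n+1}\Delta_{n-1}/\Delta_n^2$ from Theorem~\ref{thm32}, since (\ref{eq:anbn}) shows that the right-hand side is precisely $a_n^2$. Replacing $\Delta_n$ by $\W_n(\mu_0)$ using Theorem~\ref{thm31} then gives $a_n^2(z)=\dz{z}^2\big(\ln\W_n(\mu_0)\big)$, as required. Alternatively, one can derive the same identity directly: from $a_n^2=\Delta_{n+1}\Delta_{n-1}/\Delta_n^2$ and the second Toda relation $\dz{z}(b_n)=a_{n+1}^2-a_n^2$ together with the $b_n$ formula just established, one obtains $\dz{z}^2(\ln\W_n)=a_n^2$ by telescoping, but invoking the Toda equation is more direct.

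There is no substantial obstacle: the whole statement is an essentially formal consequence of Theorems~\ref{thm31} and~\ref{thm32}, with the only subtlety being to verify the base cases $n=0,1$ where $\W_0$ should be interpreted as $1$ and $\W_{-1}=0$ in order to match the conventions $\Delta_0=1$, $\Delta_{-1}=0$ stated after (\ref{eq:dets}). One should briefly note this consistency, but no genuine calculation is required.
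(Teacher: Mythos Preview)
Your argument is correct and follows the same approach as the paper, which simply states ``Applying Theorem~\ref{thm31} to (\ref{eq:anbn}) gives the result.'' Your derivation of $b_n$ matches exactly; for $a_n^2$ you are right that one must also invoke the Toda equation $\dz{z}^2(\ln\Delta_n)=\Delta_{n+1}\Delta_{n-1}/\Delta_n^2$ from Theorem~\ref{thm32}, a step the paper's one-line proof leaves implicit.
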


\begin{proof}{Applying Theorem \ref{thm31} to (\ref{eq:anbn}) gives the result.}\end{proof}

As discussed in \S\ref{ssec:cop} above, classical orthogonal polynomials are characterized by the Pearson equation (\ref{eq:Pearson}). Analogously discrete orthogonal polynomials are characterized by the discrete Pearson equation
\begin{equation}\Delta\big[\sigma(k)\w(k)\big]=\tau(k)\w(k),\label{eq:disPearson}\end{equation}
where $\Delta$ is the forward difference operator
\[\Delta f(k)=f(k+1)-f(k),\]
$\sigma(k)$ is a monic polynomial with deg$(\sigma)\leq2$ and $\tau(k)$ is a polynomial with deg$(\tau)=1$.
A discrete Pearson equation can also be defined using the backward difference operator
\[\nabla f(k)=f(k)-f(k-1).\]
If the discrete weight $\w(k)$ satisfies (\ref{eq:disPearson}) with either deg$(\sigma)>2$ or deg$(\tau)>1$, then the discrete orthogonal polynomial is said to be \textit{semi-classical}, cf.~\cite{refDM}. 

For further information about discrete orthogonal polynomials see, for example, the books by Beals and Wong \cite[Chapter 5]{refBW},
Chihara \cite[Chapter VI]{refChihara}, Ismail \cite[Chapter 6]{refIsmail} and Nikiforov, Suslov and Uvarov \cite{refNSU}.

\section{\label{sec:char}Charlier polynomials and generalized Charlier polynomials}
\subsection{\label{ssec:charlier}Charlier polynomials}The {Charlier polynomials} $C_n(k;z)$ 
are a family of orthogonal polynomials introduced in 1905 by Charlier \cite{refChar}
given by
\begin{equation}
C_n(k;z) =\HyperpFq{2}{0}\left(-n,-k;;-\ifrac{1}{z}\right)=(-1)^nn!L_n^{(-1-k)}\left(-\ifrac{1}{z}\right),\quad z>0,
\label{charlier}\end{equation}
where $\HyperpFq{2}{0}(a,b;;z)$ is the hypergeometric function 
and $L_n^{(\a)}(z)$ is the {associated Laguerre polynomial}, see \cite{refBW,refChihara,refIsmail,refDLMF}.
The Charlier polynomials are orthogonal on the lattice $\N$ with respect to the Poisson distribution 
\begin{equation} \w(k) =\frac{z^k}{k!},\qquad  z>0,\label{charlierw}\end{equation}
and satisfy the orthogonality condition
\[\sum_{k=0}^\infty C_m(k;z)C_n(k;z)\frac{z^k}{k!}=\frac{n!\,\e^{z}}{z^n}\delta_{m,n}.\]
The weight (\ref{charlierw}) satisfies the discrete Pearson equation (\ref{eq:disPearson}) with 
\[\sigma(k)=k,\qquad\tau(k)=z-k.\]
From (\ref{charlierw}), the moment $\mu_0(z)$ is given by
\begin{equation*}\mu_0(z) 
= \sum_{k=0}^\infty \frac{z^k}{k!} = \e^{z}.\end{equation*}
Hence from Theorem \ref{thm31}, the Hankel determinant $\Delta_n(z)$ is given by
\begin{equation*}
\Delta_n(z)=\W_n(\mu_0)=z^{n(n-1)/2}\e^{nz}\,\prod_{k=1}^{n-1}(k!),
\end{equation*}
and so from Theorem \ref{thm33} the recurrence coefficients are given by
\begin{eqnarray*} a_n^2(z)=\dz{z}^2\big(\ln\W_n(\mu_0)\big)=nz,\qquad
b_n(z) =\dz{z}\left(\ln\frac{\W_{n+1}(\mu_0)}{\W_n(\mu_0)}\right)= n+z. 
\end{eqnarray*}

\subsection{\label{ssec:gencharlier}Generalized Charlier polynomials }
Smet and van Assche \cite{refSmetvA} generalized the Charlier weight (\ref{charlierw}) with one additional parameter through the weight function
\[\w(x)=\frac{\Gamma(\b+1)\,z^x}{\Gamma(\b+x+1)\,\Gamma(x+1)},\qquad  z>0,\] 
with  $\b$ a parameter such that $\b>-1$. This gives the discrete weight 
\begin{equation} \w(k) = \frac{z^k}{(\b+1)_k\,k!},\qquad  z>0,\label{gencharlierw}\end{equation}
where $(\b+1)_k
=\Gamma(\b+1+k)/\Gamma(\b+1)$ is the Pochhammer symbol, on the lattice $\N$.
The weight (\ref{gencharlierw}) satisfies the discrete Pearson equation (\ref{eq:disPearson}) with 
\[\sigma(k)=k(k+\b),\qquad\tau(k)=-k^2-\b k+z,\]
and so the generalized Charlier polynomials are semi-classical orthogonal polynomials.
The special case $\b=0$ was first considered by Hounkonnou, Hounga and Ronveaux \cite{refHHR} and later studied by van Assche and Foupouagnigni \cite{refvAF03}.  

For the generalized Charlier weight (\ref{gencharlierw}), the orthonormal polynomials $p_n(k;z)$ satisfy the orthogonality condition
\[\sum_{k=0}^\infty p_m(k;z)p_n(k;z)\frac{z^k}{(\b+1)_k\,k!}=\delta_{m,n},\]
and the three-term recurrence relation
\begin{equation}
xp_n(x;z)=a_{n+1}(z)p_{n+1}(x;z)+b_n(z)p_n(x;z)+a_n(z)p_{n-1}(x;z),
\label{deq:rr1}\end{equation}
with $p_1(x;z)=0$ and $p_0(x;z)=1$. Our interest is determining explicit expressions for the coefficients 
$a_n(z)$ and $b_n(z)$ in the recurrence relation (\ref{deq:rr1}).

Smet and van Assche  \cite[Theorem 2.1]{refSmetvA} proved the following theorem for recurrence coefficients associated with the generalized Charlier weight (\ref{gencharlierw}).

\begin{theorem}{The recurrence coefficients $a_n(z)$ and $b_n(z)$ for orthonormal polynomials associated with the generalized Charlier weight (\ref{gencharlierw}) on the lattice $\N$ satisfy the discrete system
\begin{equation}\label{SvAsys}\begin{array}{l}
(a_{n+1}^2-z)(a_{n}^2-z)=z(b_n-n)(b_n-n+\b),\\[5pt] \ds b_n+b_{n-1}-n+\b+1=\ifrac{nz}{a_n^2},
\end{array}\end{equation}
with initial conditions
\begin{equation} \label{SvAsysic}
a_0^2=0,\qquad b_0=\frac{\sqrt{z}\,\BesselI{\b+1}(2\sqrt{z})}{\BesselI{\b}(2\sqrt{z})}=z\deriv{}{z}\ln\big(z^{-\b/2}\BesselI{\b}(2\sqrt{z})\big),
\end{equation}
with $\BesselI{\nu}(x)$ the {modified Bessel function}.}\end{theorem}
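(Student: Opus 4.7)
The initial conditions come from direct computation of low-index moments. Expanding the series term-by-term gives
\[\mu_0(z) = \sum_{k \ge 0} \frac{z^k}{(\b+1)_k\,k!},\]
which is recognized, after multiplying and dividing by $\G{\b+1}$ and matching with the series defining $\BesselI{\b}$, as $\mu_0(z) = \G{\b+1}\,z^{-\b/2}\BesselI{\b}(2\sqrt{z})$. The convention $p_{-1}=0$ forces $a_0^2 = 0$, while $b_0 = \mu_1/\mu_0 = \dz{z}\ln\mu_0$ since $\mu_1 = \dz{z}\mu_0$ by (\ref{def:mun}). The Bessel identity $\deriv{}{z}[z^{-\b/2}\BesselI{\b}(2\sqrt{z})] = z^{-(\b+1)/2}\BesselI{\b+1}(2\sqrt{z})$ then delivers the displayed form for $b_0$.

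For the discrete system (\ref{SvAsys}) my plan is to exploit the discrete Pearson equation $\Delta[\sigma(k)\w(k)] = \tau(k)\w(k)$ together with summation by parts. The crucial structural feature of this weight is that $\sigma(k)+\tau(k)=z$, equivalently
\[\sigma(k+1)\w(k+1) = z\,\w(k),\]
which relates sums over the reweighted measure $\sigma(k)\w(k)$ to shifted sums over $\w(k)$. Applying this together with $\sum_k p_m(k)p_n(k)\tau(k)\w(k)$ for the pairs $(m,n)=(n,n-1)$ and $(m,n)=(n,n)$, and transferring the forward difference onto $p_m p_n$ via summation by parts (the boundary at $k=0$ vanishes since $\sigma(0)=0$), one obtains identities of the form
\[\sum_k \tau(k)p_m(k)p_n(k)\w(k) \;=\; -z\sum_k \nabla[p_m(k)p_n(k)]\,\w(k-1).\]
The left-hand sides are evaluated using orthonormality and the three-term recurrence (\ref{deq:rr1}) to expand $k^j p_n$, giving expressions in $a_n$, $b_n$; the right-hand sides are re-expanded by writing $p_n(k-1)=p_n(k)-\nabla p_n(k)$ and using the recurrence again. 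The "linear" second equation of (\ref{SvAsys}) arises from the $(n,n-1)$ pairing, and the "nonlinear" first equation from the $(n,n)$ pairing after one further shift.

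The main obstacle is the bookkeeping in the algebraic elimination: the raw identities generated by summation by parts involve auxiliary shifted sums such as $\sum_k p_n(k+1)p_{n-1}(k+1)\w(k)$, and these must be reduced back to polynomial expressions in $a_n,b_n$ via the shift identity $\sigma(k)\w(k)=z\w(k-1)$ before the coupled system closes in the compact form (\ref{SvAsys}). A conceptually cleaner alternative route is to substitute the $\tau$-function representation of Theorem \ref{thm33} into the proposed system and verify it using bilinear Wronskian identities for $\BesselI{\b}$; this bypasses the elimination but requires the Bessel-function differentiation and contiguous relations in place of the Pearson identity. The full computation along the Pearson-equation route was carried out by Smet and van Assche \cite[Theorem 2.1]{refSmetvA}, to which we refer for details.
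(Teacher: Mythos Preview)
The paper does not prove this theorem at all: it is stated as a result of Smet and van Assche \cite[Theorem~2.1]{refSmetvA}, with no proof given. Your proposal is therefore strictly more than what the paper does --- you supply a correct derivation of the initial conditions (the identification of $\mu_0$ with $\Gamma(\b+1)z^{-\b/2}\BesselI{\b}(2\sqrt{z})$ in fact reappears later in the paper as a separate theorem), sketch the Pearson-equation/summation-by-parts mechanism behind the discrete system, and then defer the full elimination to the same reference \cite{refSmetvA}. That is an accurate description of how Smet and van Assche actually argue, and your honest flagging of the bookkeeping in the shifted sums as the main difficulty is appropriate. So: your proposal agrees with the paper's treatment (cite the source), and your additional sketch is sound.
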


Smet and van Assche \cite[Theorem 2.1]{refSmetvA} show that the system (\ref{SvAsys}) is a limiting case of a discrete \p\ equation, namely the first \dPIV\ in \cite[p.\ 723]{refvA07}. Since the initial conditions (\ref{SvAsysic}) involve modified Bessel functions, then clearly the solutions of the discrete system (\ref{SvAsys}) are expressed in terms modified Bessel functions.

Using the discrete system (\ref{SvAsys}) and the Toda system (\ref{eq:toda}),
Filipuk and van Assche \cite{refFvA13} show that the recurrence coefficient $b_n$ can be expressed in terms of solutions of a special case of \PV\ which can be transformed into \PIII. However their proof is rather involved and several details are omitted due to the size of expressions involved. Further Filipuk and van Assche \cite{refFvA13} do not obtain explicit expressions for the the recurrence coefficients $a_n$ and $b_n$.

The relationship between the recurrence coefficients $a_n(z)$ and $b_n(z)$ and classical solutions of \PIIIi\ can be shown in much more straightforward way and also obtain explicit expressions for these coefficients. 
First we obtain explicit expressions for the moment $\mu_0(z)$ and the Hankel determinant $\Delta_n(z)$.

\begin{theorem}{For the generalized Charlier weight (\ref{gencharlierw}) the moment $\mu_0(z)$ is given by
\begin{equation}
\mu_0(z)=\sum_{k=0}^\infty \frac{z^k}{(\b+1)_k\,k!}
=\Gamma(\b+1)z^{-\b/2}\BesselI{\b}\big(2\sqrt{z}\big),\label{genC:mu0}\end{equation}
with $\BesselI{\nu}(x)$ the {modified Bessel function},
and the Hankel determinant $\Delta_n(z;\b)$ is given by
\begin{equation}\Delta_n(z;\b)=\big[\Gamma(\b+1)\big]^n \W_n\Big(z^{-\b/2}\BesselI{\b}\big(2\sqrt{z}\big)\Big).\label{genC:Delta}
\end{equation}}\end{theorem}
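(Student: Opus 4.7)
The proof is essentially a direct calculation. First I would establish the closed form for the zeroth moment, and then obtain the Hankel determinant formula as a corollary via Theorem~\ref{thm31}.

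For the moment $\mu_0(z)$, I would start from the series definition of the modified Bessel function,
\begin{equation*}
\BesselI{\b}(x)=\sum_{k=0}^\infty \frac{(x/2)^{2k+\b}}{k!\,\Gamma(k+\b+1)},
\end{equation*}
and evaluate it at $x=2\sqrt{z}$, which turns $(x/2)^{2k+\b}$ into $z^{k+\b/2}$. Pulling out the factor $z^{\b/2}$ and using the identity $\Gamma(k+\b+1)=\Gamma(\b+1)(\b+1)_k$ gives
\begin{equation*}
z^{-\b/2}\BesselI{\b}\big(2\sqrt{z}\big)=\frac{1}{\Gamma(\b+1)}\sum_{k=0}^\infty\frac{z^k}{(\b+1)_k\,k!},
\end{equation*}
and multiplying by $\Gamma(\b+1)$ yields (\ref{genC:mu0}). (A small side remark: since $(\b+1)_k\,k!>0$ for $\b>-1$, convergence of the series is immediate and no interchange needs justification.)

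For the Hankel determinant, I would invoke Theorem~\ref{thm31}, which gives $\Delta_n(z;\b)=\W_n(\mu_0)$ with $\W_n$ as defined in (\ref{def:Wnn}). Writing $\mu_0(z)=\Gamma(\b+1)\,f(z)$ where $f(z)=z^{-\b/2}\BesselI{\b}(2\sqrt{z})$, the operator $\dz{z}=z\,\d/\d z$ is linear, so each entry $\dz{z}^{i+j}(\mu_0)$ of the $n\times n$ determinant carries a single factor of $\Gamma(\b+1)$. Factoring $\Gamma(\b+1)$ out of each of the $n$ rows by multilinearity of the determinant then produces the overall factor $[\Gamma(\b+1)]^n$, yielding (\ref{genC:Delta}).

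There is no real obstacle here: the only content is recognizing the series as a modified Bessel function and then using the linearity of $\dz{z}$ together with the multilinearity of the determinant. The step most likely to be glossed over is verifying that the particular normalization $z^{-\b/2}\BesselI{\b}(2\sqrt{z})$ is consistent with the $\psi_\nu$ appearing in (\ref{eq:PT.SF.eq33isol}) (taking $\nu=\b$, $C_2=0$ and $\ep_1=1$, $\ep_2=-1$), which is what will later allow the Hankel determinant to be identified with a \PIIIi\ $\tau$-function $\tau_{n,\b}(z)$; but for the statement as given that identification is not yet needed.
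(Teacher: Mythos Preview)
Your proof is correct and follows exactly the same approach as the paper's: identify the series as the modified Bessel function via its power-series expansion, then invoke Theorem~\ref{thm31} to pass to the Hankel determinant. The paper's version is terser (it just says ``follows immediately'' and ``using Theorem~\ref{thm31}''), whereas you spell out the Pochhammer/Gamma manipulation and the multilinearity argument that extracts the factor $[\Gamma(\b+1)]^n$.

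One small slip in your closing parenthetical (which, as you note, is not needed for the present statement): the function $z^{-\b/2}\BesselI{\b}(2\sqrt{z})$ corresponds in (\ref{eq:PT.SF.eq33isol}) to $\ep_1=-1$, $\ep_2=1$ (the case with prefactor $t^{-\nu/2}$), not $\ep_1=1$, $\ep_2=-1$. This is indeed the choice the paper uses later when matching to (\ref{sol:SIIIi}).
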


\begin{proof}Since the modified Bessel function $\BesselI{\nu}(x)$ has the series expansion \cite[\S10.25.2]{refDLMF}
\[\BesselI{\nu}(x)=
\sum _{{k=0}}^{\infty}\frac{(\tfrac12x)^{2k+\nu}}{k!\mathop{\Gamma\/}\nolimits\!\left(\nu+k+1\right)},\]
then the expression (\ref{genC:mu0}) for the moment $\mu_0(z)$ follows immediately. Then using Theorem \ref{thm31}, we obtain the expression (\ref{genC:Delta}) for the Hankel determinant $\Delta_n(z;\b)$.
\end{proof}

Hence we obtain explicit expressions for the recurrence coefficients $a_n(z)$ and $b_n(z)$.

\begin{corollary}{The coefficients $a_n(z)$ and $b_n(z)$ in the recurrence relation (\ref{deq:rr1}) have the form
\begin{equation}a_n^2(z)=\left(z\deriv{}{z}\right)^2\big(\ln\Delta_n(z;\b)\big),\qquad b_n(z) = 
z\deriv{}{z}\left(\ln\frac{\Delta_{n+1}(z;\b)}{\Delta_n(z;\b)}\right),\label{deq:anbn1}\end{equation}
with $\Delta_n(z;\b)$ given by (\ref{genC:Delta}).}\end{corollary}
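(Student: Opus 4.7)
The plan is to combine Theorem \ref{thm33}, which expresses the recurrence coefficients directly in terms of the Wronskian-type determinant $\W_n(\mu_0)$, with the explicit formula (\ref{genC:Delta}) for the Hankel determinant $\Delta_n(z;\b)$ that was just established. The key observation is that once one identifies $\W_n(\mu_0)$ with $\Delta_n(z;\b)$ up to a multiplicative constant (in $z$), the logarithmic derivative $\dz{z}=z\,\d/\d z$ kills the constant and the result is immediate.

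Concretely, I would first note that by (\ref{genC:mu0}) the zeroth moment factorises as
\begin{equation*}
\mu_0(z)=\Gamma(\b+1)\,z^{-\b/2}\BesselI{\b}\big(2\sqrt{z}\big).
\end{equation*}
Since the determinant $\W_n(\psi)$ defined by (\ref{def:Wnn}) is multilinear in the rows and each row is a list of $\dz{z}$-derivatives of $\psi$, the scalar prefactor $\Gamma(\b+1)$ may be pulled out of every row, yielding
\begin{equation*}
\W_n(\mu_0)=[\Gamma(\b+1)]^n\,\W_n\!\left(z^{-\b/2}\BesselI{\b}\big(2\sqrt{z}\big)\right)=\Delta_n(z;\b),
\end{equation*}
by (\ref{genC:Delta}).

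Substituting this identification into the formulas (\ref{deq:anbn}) of Theorem \ref{thm33} then gives
\begin{equation*}
a_n^2(z)=\dz{z}^{2}\big(\ln\Delta_n(z;\b)\big),\qquad b_n(z)=\dz{z}\!\left(\ln\frac{\Delta_{n+1}(z;\b)}{\Delta_n(z;\b)}\right),
\end{equation*}
which is precisely (\ref{deq:anbn1}), since $\dz{z}\equiv z\,\d/\d z$. Any residual $\b$-dependent constant from the factor $[\Gamma(\b+1)]^n$ contributes only an additive constant to $\ln\Delta_n(z;\b)$ and is annihilated by the first application of $\dz{z}$, so no bookkeeping of such terms is needed.

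There is essentially no obstacle: the corollary is a direct consequence of a multilinearity observation on the Wronskian together with the two preceding results. The only point to state carefully is the row-scaling argument, which justifies replacing $\W_n(\mu_0)$ by $\Delta_n(z;\b)$ inside the logarithmic derivative.
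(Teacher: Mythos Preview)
Your proposal is correct and follows essentially the same approach as the paper: the paper's proof is the single line ``These follow immediately from Theorem \ref{thm33}'', and your argument simply makes explicit the identification $\W_n(\mu_0)=\Delta_n(z;\b)$ (via the row-scaling/multilinearity observation) that underlies that appeal. No additional ideas are needed.
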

\begin{proof}{These follow immediately from Theorem \ref{thm33}.}\end{proof}

Finally we relate the Hankel determinant $\Delta_n(z;\b)$ to solutions of (\ref{eq:PT.HM3p.DE34}), the \PIIIi\ $\sigma$-equation.

\begin{theorem}{The function 
\begin{equation} S_n(z;\b)=z\deriv{}{z}\ln\Delta_n(z;\b), \label{def:Sn3}\end{equation} 
with $\Delta_n(z;\b)$ given by (\ref{genC:Delta}),
satisfies the second-oder, second-degree equation
\begin{eqnarray}\left(z\deriv[2]{S_n}{z} \right)^{\!2}&=\left[ n- \left( n+\b \right) \deriv{S_n}{z}  \right] ^2 \nonumber\\ &\quad
-4 \deriv{S_n}{z}  \left(\deriv{S_n}{z} -1 \right)\left[z\deriv{S_n}{z} -S_n + \tfrac12n (n-1)\right].\label{eq:Sp3i}\end{eqnarray}}\end{theorem}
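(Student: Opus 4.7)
The plan is to identify $S_n(z;\beta)$ (up to an explicit affine shift) with a Hamiltonian solution of \PIIIi\ coming from the Bessel-function hierarchy, then translate the known Jimbo--Miwa--Okamoto equation (\ref{eq:PT.HM3p.DE34}) into the claimed equation (\ref{eq:Sp3i}) by direct substitution.

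First, I would strip the irrelevant constant factor. Since $\Delta_n(z;\beta)=[\Gamma(\beta+1)]^n\,\W_n\big(z^{-\beta/2}\BesselI{\beta}(2\sqrt{z})\big)$, the prefactor is independent of $z$ and so
\[S_n(z;\beta)=\dz{z}\ln\Delta_n(z;\beta)=\dz{z}\ln\W_n\big(z^{-\beta/2}\BesselI{\beta}(2\sqrt{z})\big).\]
Next, I would match $\psi_\beta(z):=z^{-\beta/2}\BesselI{\beta}(2\sqrt{z})$ against the classification (\ref{eq:PT.SF.eq33isol}): it is exactly the $\epsilon_1=-1$, $\epsilon_2=1$ branch of the fundamental solutions of (\ref{eq:PT.SF.eq33i}) with $\nu=\beta$ (the $K_\beta$ partner being suppressed by taking $C_1=1$, $C_2=0$). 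Consequently $\W_n(\psi_\beta)$ coincides with the $\tau$-function $\tau_{n,\beta}(z)$ defined in (\ref{def:Wn}).

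Now I invoke the special-function theorem giving (\ref{sol:SIIIi}). With the choice $\epsilon_1=-1$, $\epsilon_2=1$, $\nu=\beta$, this theorem asserts that
\[\sigma(z):=S_n(z;\beta)-\tfrac12 z+\tfrac14\beta^2+\tfrac12 n(1+\beta)-\tfrac14 n^2\]
satisfies (\ref{eq:PT.HM3p.DE34}) with parameters
$(\th_0,\th_\infty)=(\beta+n,\,n-\beta)$, so that $\th_0\th_\infty=n^2-\beta^2$ and $\th_0^2+\th_\infty^2=2(n^2+\beta^2)$. Using $\sigma'=S_n'-\tfrac12$, $z\sigma''=zS_n''$, and $z\sigma'-\sigma=zS_n'-S_n-c_0$ (with $c_0=\tfrac14\beta^2+\tfrac12 n(1+\beta)-\tfrac14 n^2$), the factor $4(\sigma')^2-1$ becomes $4S_n'(S_n'-1)$.

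The remaining step is routine but bookkeeping-heavy: plug these substitutions into (\ref{eq:PT.HM3p.DE34}) and rearrange to the form $(zS_n'')^2=\cdots$. After cancelling the common piece $-4S_n'(S_n'-1)(zS_n'-S_n)+n^2$ from both sides, one is left with matching polynomial expressions of degree two in $S_n'$ and checking two scalar identities, namely $4c_0=(n+\beta)^2-2n(n-1)$ (coefficient of $(S_n')^2$) and $2n(1+\beta)=4c_0+n^2-\beta^2$ (coefficient of $S_n'$); both reduce to trivial arithmetic with the explicit $c_0$. This yields exactly (\ref{eq:Sp3i}). The only non-mechanical point, and the main thing to get right, is the matching of the signs $\epsilon_1,\epsilon_2$ and the parameters $(\th_0,\th_\infty)$ in the Okamoto classification with the modified Bessel branch appearing in $\mu_0(z)$; once that identification is fixed, the rest is a clean substitution.
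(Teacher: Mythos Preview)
Your proposal is correct and follows essentially the same route as the paper: identify $\psi_\beta(z)=z^{-\beta/2}\BesselI{\beta}(2\sqrt{z})$ as the $\epsilon_1=-1$, $\epsilon_2=1$, $\nu=\beta$ case of (\ref{eq:PT.SF.eq33isol}), invoke (\ref{sol:SIIIi}) to obtain the affine shift $S_n=\sigma+\tfrac12 z+\tfrac14 n^2-\tfrac12 n(\beta+1)-\tfrac14\beta^2$ with $(\theta_0,\theta_\infty)=(n+\beta,n-\beta)$, and then substitute into (\ref{eq:PT.HM3p.DE34}). The paper compresses the final substitution into ``as is easily verified'', whereas you spell out the bookkeeping (the factorisation $4(\sigma')^2-1=4S_n'(S_n'-1)$ and the two scalar identities for $c_0$); both are the same argument.
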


\begin{proof}{\Eref{eq:Sp3i} is equivalent to 
(\ref{eq:PT.HM3p.DE34}) through the transformation
\begin{equation}S_n(z;\b) =\sIII (z) +\tfrac12{z}+\tfrac14{n}^2-\tfrac12n(\b+1)-\tfrac14\b^{2},\label{eq:Sn3}\end{equation}
with parameters $(\th_0,\th_{\infty})=(n+\b,n-\b)$, as is easily verified. Then comparing (\ref{eq:Sn3}), with $S_n$ given by (\ref{def:Sn3}), to (\ref{sol:SIIIi}), with $\nu=\b$, $\ep_1=-1$ and $\ep_2=1$, gives the result.
}\end{proof}

\begin{remarks}{\quad \phantom{x}
\begin{enumerate}
\item In terms of $S_n(z;\b)$ given by (\ref{def:Sn3}), the coefficients $a_n(z)$ and $b_n(z)$ in the recurrence relation (\ref{deq:rr1}) have the form 
\[a_n^2(z)=z\deriv{}{z} S_n(z;\b),\qquad b_n(z) =S_{n+1}(z;\b)-S_n(z;\b).\]
\item Solving (\ref{eq:Sn3}) for $\sIII (z)$ and substituting in (\ref{eq:PT.HM3p.DE35}) yields the solution of the Hamiltonian system (\ref{eq:PT.HM3p.DE323}) given by
\[q(z)=\frac{zS_n''(z)-2(n+\b)S_n'(z)+2n}{4S_n'(z)\big[1-S_n'(z)\big]},\qquad p(z)=S_n'(z),\qquad z=t,\] with parameters $(\th_0,\th_{\infty})=(n+\b,n-\b)$.
\end{enumerate}}\end{remarks}

\section{\label{sec:meix}Meixner polynomials and generalizations}
\subsection{\label{ssec:meixner}Meixner polynomials}The {Meixner polynomials} $M_n(k;\a,z)$  
are a family of discrete orthogonal polynomials introduced in 1934 by Meixner \cite{refMeix}
given by
\begin{equation} M_n(k;\a,z)=\HyperpFq{2}{1}\left(-n,-k;-\a;1-\ifrac{1}{z}\right),\qquad0<z<1,\label{meixner}\end{equation}
with $\a>0$,
where $\HyperpFq{2}{1}(a,b;c;z)$ is the hypergeometric function, see \cite{refBW,refChihara,refIsmail,refDLMF}. In the case when $\a=-N$, with $N\in\N$ and $z=p/(1-p)$, these polynomials are referred to as the Krawtchouk polynomials for $k\in\{0,1,\ldots,N\}$
\begin{equation}
K(k;p,N)=\HyperpFq{2}{1}\left(-n,-k;N;\ifrac{1}{p}\right),\qquad 0<p<1,
\label{krawtchouk}\end{equation}
which were introduced in 1929 by Krawtchouk \cite{refKraw}.
The Meixner polynomials (\ref{meixner}) are orthogonal with respect to the discrete weight
\begin{equation} \w(k) =\frac{(\a)_k\,z^k}{k!},\qquad \a>0,\quad z>0,\label{meixnerw}\end{equation}
and satisfy the orthogonality condition
\[\sum_{k=0}^\infty M_m(k;\a,z)M_n(k;\a,z)\frac{(\a)_k\,z^k}{k!}=\frac{n!\,z^{-n}}{(\a)_n (1-z)^\a}\delta_{m,n}.\]
The weight (\ref{meixnerw}) satisfies the discrete Pearson equation (\ref{eq:disPearson}) with 
\[\sigma(k)=k,\qquad\tau(k)=(z-1)k +z\a.\]
From (\ref{meixnerw}), the moment $\mu_0(z)$ is given by
\begin{equation*}\mu_0(z) 
=\sum_{k=0}^\infty \frac{(\a)_k\,z^k}{k!} =(1-z)^{-\a}.\end{equation*}
Hence from Theorem \ref{thm31} that the Hankel determinant $\Delta_n(z)$ is given by
\begin{equation*}
\Delta_n(z)=\W_n(\mu_0)=\frac{z^{n(n-1)/2}}{(1-z)^{n(n+\a-1)}}\,\prod_{k=1}^{n-1}k!(\a+k)^{n-k-1},
\end{equation*}
and so, from Theorem \ref{thm33}, the recurrence coefficients are given by
\begin{eqnarray*}a_n^2(z)=\frac{n(n+\a-1)z}{(1-z)^2},\qquad b_n(z) = \frac{n+(n+\a)z}{1-z}.
\end{eqnarray*}

\subsection{\label{ssec:genmeixner}Generalized Meixner polynomials}
In a similar way to that for the Charlier weight above, Smet and van Assche \cite{refSmetvA} generalized the Meixner weight (\ref{charlierw}) with one additional parameter through the weight function the weight function
\[\w(x)=\frac{\Gamma(\a+x)\,\Gamma(\b)\,z^x}{\Gamma(\a)\,\Gamma(\b+x)\,\Gamma(x+1)},\qquad z>0,\] 
with $\a,\b>0$, which gives the weight 
\begin{equation}\w(k) =\frac{(\a)_k\,z^k}{(\b)_k\,k!},\qquad z>0.\label{genmeixnerw}\end{equation}
The weight (\ref{genmeixnerw}) satisfies the discrete Pearson equation (\ref{eq:disPearson}) with 
\[\sigma(k)=k(k+\b-1),\qquad\tau(k)=-k^2+(z+1-\b)k+z\a,\]
and so the generalized Meixner polynomials are semi-classical orthogonal polynomials.

Boelen, Filipuk and van Assche \cite{refBFvA} considered the special case of (\ref{genmeixnerw}) when $\b=1$ and showed that
the recurrence coefficients $a_n$ and $b_n$ satisfy a limiting case of an asymmetric \dPIV\ equation. We note that the special case $\a=\b$ gives the classical Charlier weight (\ref{charlierw}) and the case $\a=1$ corresponds to the classical Charlier weight on the lattice $\N+1-\b$.

For the generalized Meixner weight (\ref{genmeixnerw}), the orthonormal polynomials $p_n(k;z)$ satisfy the orthogonality condition
\[\sum_{k=0}^\infty p_m(k;z)p_n(k;z)\frac{(\a)_k\,z^k}{(\b)_k\,k!}=\delta_{m,n},\]
and the three-term recurrence relation
\begin{equation}
xp_n(x;z)=a_{n+1}(z)p_{n+1}(x;z)+b_n(z)p_n(x;z)+a_n(z)p_{n-1}(x;z),
\label{deq:rr2}\end{equation}
with $p_1(x;z)=0$ and $p_0(x;z)=1$. As for the generalized Charlier weight (\ref{gencharlierw}), our interest is determining explicit expressions for the coefficients $a_n(z)$ and $b_n(z)$ in the recurrence relation (\ref{deq:rr2}).

Smet and van Assche  \cite[Theorem 2.1]{refSmetvA} proved the following theorem for recurrence coefficients associated with the generalized Meixner weight (\ref{genmeixnerw}).
\begin{theorem}{The recurrence coefficients $a_n(z)$ and $b_n(z)$ for orthonormal polynomials associated with the generalized Meixner weight (\ref{genmeixnerw}) on the lattice $\N$ satisfy
\begin{eqnarray*}a_n^2=nz-(\a-1)x_n,\\ b_n=n+\a-\b+z-(\a-1)y_n/z.\end{eqnarray*}
where $x_n$ and $y_n$ satisfy the discrete system
\begin{equation}\label{SvAsysM}\begin{array}{l}
\ds(x_n+y_n)(x_{n+1}+y_n)=\ds\frac{\a-1}{z^2}y_n(y_n-z)\left(y_n-z\,\frac{\a-\b}{\a-1}\right),\\[5pt]
\ds(x_n+y_n)(x_{n}+y_{n-1})=\ds\frac{(\a-1)x_n(x_n+z)}{(\a-1)x_n-nz}\left(x_n+z\,\frac{\a-\b}{\a-1}\right), 
\end{array}\end{equation} with initial conditions
\begin{equation}\label{SvAsysMic}
a_0^2=0,\qquad b_0= \frac{\a z}{\b}\,\frac{M(\a+1,\b+1,z)}{M(\a,\b,z)}=z\deriv{}{z}\ln M(\a,\b,z),
\end{equation}
and $M(\a,\b,z)$ is the Kummer function. 
}\end{theorem}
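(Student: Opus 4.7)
The plan is to establish the theorem in two stages: first, the initial conditions together with the change of variables linking $(a_n,b_n)$ to the auxiliary unknowns $(x_n,y_n)$; second, the derivation of the discrete system (\ref{SvAsysM}) via the ladder-operator formalism for semi-classical discrete orthogonal polynomials.

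For the initial conditions, I would compute the zeroth moment directly from the series, recognising $\mu_0(z) = \sum_{k=0}^\infty (\a)_k z^k/[(\b)_k\, k!] = M(\a,\b,z)$ as the Kummer confluent hypergeometric function. The identity $a_0^2 = 0$ is forced by the convention $p_{-1}=0$. For $b_0$, combining the elementary formula $b_0 = \mu_1/\mu_0$ with $\mu_1 = \dz{z}(\mu_0)$ coming from (\ref{def:mun}), and the contiguous relation $M'(\a,\b,z) = (\a/\b)\,M(\a+1,\b+1,z)$, yields both the ratio-of-Kummer-functions expression and the logarithmic-derivative form stated in (\ref{SvAsysMic}).

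For the discrete system (\ref{SvAsysM}), I would proceed by the ladder-operator method. The discrete Pearson equation (\ref{eq:disPearson}) with $\sigma(k)=k(k+\b-1)$ and $\tau(k)=-k^2+(z+1-\b)k+z\a$, together with summation by parts against the orthonormal polynomials, produces a pair of structure relations that express $\sigma(k)\Delta p_n(k)$ and its backward-difference analogue as linear combinations of $p_n(k)$ and $p_{n-1}(k)$ with coefficients polynomial of degree at most one in $k$. These coefficients involve two previously undetermined sums over $k$, which I would label as $x_n$ and $y_n$ after a convenient normalisation. The compatibility of the resulting ladder operators with the three-term recurrence (\ref{deq:rr2}) then delivers the identities $a_n^2 = nz - (\a-1)x_n$ and $b_n = n+\a-\b+z - (\a-1)y_n/z$, together with the quadratic pair (\ref{SvAsysM}). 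The Toda system of Theorem \ref{thm32} provides a consistency check on the $z$-derivatives at every step.

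The main obstacle is the algebraic bookkeeping in this second stage: because the weight is semi-classical with $\tau$ of degree two rather than one, the compatibility conditions produce numerous cross-terms, and only the precise normalisations of $x_n,y_n$ cause these to telescope into the clean bilinear form displayed in (\ref{SvAsysM}). A useful check along the way is the degenerate case $\a=\b$: the weight reduces to the classical Charlier weight (\ref{charlierw}) and the auxiliary variables must vanish identically so as to reproduce the closed-form recurrence coefficients $a_n^2=nz$, $b_n=n+z$ recalled in \S\ref{ssec:charlier}.
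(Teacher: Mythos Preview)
The paper does not actually supply a proof of this theorem: it is quoted verbatim as Theorem~2.1 of Smet and van~Assche \cite{refSmetvA}, introduced with the sentence ``Smet and van~Assche proved the following theorem\ldots'', and then used as input for the subsequent analysis. So there is no in-paper argument against which to compare your proposal.

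That said, your outline is a reasonable reconstruction of how such a result is obtained, and it is essentially the strategy of the original Smet--van~Assche paper. The initial-condition computation is exactly right: $\mu_0(z)=M(\a,\b,z)$ gives $b_0=\mu_1/\mu_0=z\,\d(\ln\mu_0)/\d z$ via (\ref{def:mun}), and the contiguous relation for $M'$ yields the ratio form. For the discrete system, the ladder/structure-relation approach you describe (discrete Pearson data fed through summation by parts to produce raising and lowering operators, whose compatibility with the three-term recurrence fixes the auxiliary quantities $x_n,y_n$ and forces (\ref{SvAsysM})) is the standard route in this semi-classical setting and is what Smet and van~Assche carry out. Your caveat about the bookkeeping is apt: the degree-two $\tau$ is precisely what pushes the weight outside the classical class and generates the extra coupling, and the $\a=\b$ sanity check you propose is the correct degeneration. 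If you want to turn the sketch into a complete proof you would need to write out the structure coefficients explicitly and verify the two bilinear relations term by term; nothing in the present paper does that for you.
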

We note that $M(\a,\b,z)=\HyperpFq{1}{1}(\a,\b,z)$, the confluent hypergeometric function \cite[\S13]{refDLMF}.
Smet and van Assche \cite{refSmetvA} show that the discrete system (\ref{SvAsysM}) can be identified as a limiting case of an asymmetric \dPIV\ equation. Filipuk and van Assche \cite{refFvA11} show that the system (\ref{SvAsysM}) can be obtained from a \bt\ of \PV\ (\ref{eq:PT.DE.PV}).
Since the initial conditions (\ref{SvAsysMic}) involve Kummer functions, then clearly the solutions of the discrete system (\ref{SvAsysM}) are expressed in terms Kummer functions.

Using the discrete system (\ref{SvAsysM}) and the Toda system (\ref{eq:toda}),
Filipuk and van Assche \cite{refFvA11} show that the recurrence coefficients $a_n(z)$ and $b_n(z)$ are related to classical solutions of \PV\ (\ref{eq:PT.DE.PV}), for the parameters
\[(A,B,C,D)=\big(\tfrac12(\a-1)^2,-\tfrac12(n+\a-\b)^2,\pm(n+\b),-\tfrac12\big).\]
However 
their proof is rather cumbersome and most of the details are omitted due to the size of expressions involved. Further Filipuk and van Assche \cite{refFvA11} do not obtain explicit expressions for the the recurrence coefficients $a_n$ and $b_n$.

In an analogous way to that for the generalized Charlier polynomials in \S\ref{ssec:gencharlier} above, the relationship between the recurrence coefficients $a_n(z)$ and $b_n(z)$ and classical solutions of \PV\ can be shown using a much more straightforward way and also obtain explicit expressions for these coefficients.
First we obtain explicit expressions for the moment $\mu_0(z)$ and the Hankel determinant $\Delta_n(z;\a,\b)$.

\begin{theorem}{For the generalized Meixner weight (\ref{genmeixnerw}) the moment $\mu_0(z)$ is given by
\begin{equation}
\mu_0(z) 
=\sum_{k=0}^\infty \frac{(\a)_k\,z^k}{(\b)_k\,k!}=M(\a,\b,z),\label{genM:mu0}\end{equation}
with $M(\a,\b,z)$ the \textit{Kummer function}, and the Hankel determinant $\Delta_n(z)$ given by
\begin{equation}\Delta_n(z;\a,\b)=\W_n\big(M(\a,\b,z)\big).\label{genM:Delta}
\end{equation}}\end{theorem}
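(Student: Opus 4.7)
The plan is to mirror the proof of the corresponding result for the generalized Charlier weight (equations (\ref{genC:mu0})--(\ref{genC:Delta})) almost verbatim, since the generalized Meixner weight (\ref{genmeixnerw}) has exactly the structure $\w(k)=c(k)z^k$ with $c(k)=(\a)_k/[(\b)_k\,k!]$ independent of $z$ to which the general machinery of \S\ref{ssec:dop} applies.

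First I would establish (\ref{genM:mu0}) by direct identification of the series. The Kummer function admits the series representation
\[
M(\a,\b,z)=\sum_{k=0}^{\infty}\frac{(\a)_k}{(\b)_k}\,\frac{z^k}{k!}
\]
(see \cite[\S13.2]{refDLMF}), and this is precisely the defining series for $\mu_0(z)=\sum_{k=0}^{\infty}(\a)_k z^k/[(\b)_k k!]$. Hence $\mu_0(z)=M(\a,\b,z)$. Convergence for all $z\in\C$ when $\b\notin\{0,-1,-2,\dots\}$ is standard, and in our setting $\a,\b>0$ so there is no issue.

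Next I would derive (\ref{genM:Delta}) by invoking Theorem \ref{thm31}. Because the generalized Meixner weight is of the form $\w(k)=c(k)z^k$, its higher moments satisfy
\[
\mu_n(z)=\sum_{k=0}^{\infty}k^n\,\frac{(\a)_k}{(\b)_k\,k!}\,z^k=\dz{z}^n(\mu_0),
\]
since $\dz{z}(z^k)=kz^k$. Therefore $\mu_n(z)$ has the form (\ref{def:mun}), and Theorem \ref{thm31} applies directly: the Hankel determinant $\Delta_n(z;\a,\b)$ defined by (\ref{eq:dets}) coincides with the Wronskian-type determinant $\W_n(\mu_0)$ built from $\mu_0$ via the operator $\dz{z}$. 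Substituting $\mu_0(z)=M(\a,\b,z)$ from the first part yields $\Delta_n(z;\a,\b)=\W_n\bigl(M(\a,\b,z)\bigr)$, as claimed.

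There is no real obstacle here: the proof is a clean two-step argument that (i) identifies the zeroth moment as a hypergeometric series and (ii) applies the already-proved Theorem \ref{thm31}. The only point worth flagging in the write-up is that one should note the weight has the $c(k)z^k$ structure required by (\ref{def:mun}), since this is what licenses the replacement of $\mu_n$ by $\dz{z}^n\mu_0$ and hence the reduction of $\Delta_n$ to $\W_n$.
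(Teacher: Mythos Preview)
Your proposal is correct and follows essentially the same approach as the paper: identify the series for $\mu_0(z)$ as the standard series expansion of the Kummer function $M(\a,\b,z)$, then invoke Theorem~\ref{thm31} (using the $c(k)z^k$ structure of the weight) to obtain $\Delta_n(z;\a,\b)=\W_n\big(M(\a,\b,z)\big)$. Your write-up is slightly more explicit about why Theorem~\ref{thm31} applies, but the argument is the same.
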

\begin{proof}Since the Kummer function $M(\a,\b,z)$ has the series expansion \cite[\S13.2.2]{refDLMF}
\[M(\a,\b,z)=\sum _{k=0}^{\infty}\frac{(\a)_{k}}{(\b)_{k}\,k!}\,z^{{k}},\]
then the expression (\ref{genM:mu0}) for the moment $\mu_0(z)$ follows immediately. Then we use Theorem \ref{thm31}, to obtain the expression (\ref{genM:Delta}) for the Hankel determinant $\Delta_n(z;\a,\b)$.
\end{proof}

Hence we obtain explicit expressions for the recurrence coefficients $a_n(z)$ and $b_n(z)$.

\begin{theorem}{The coefficients $a_n(z)$ and $b_n(z)$ in the recurrence relation (\ref{deq:rr2}) have the form
\begin{equation}\fl
a_n^2(z)=\left(z\deriv{}{z}\right)^2\big(\ln\Delta_n(z;\a,\b)\big),\qquad b_n(z) = 
z\deriv{}{z}\left(\ln\frac{\Delta_{n+1}(z;\a,\b)}{\Delta_n(z;\a,\b)}\right),\label{deq:anbn2}\end{equation}
with $\Delta_n(z)$ given by (\ref{genM:Delta}).}\end{theorem}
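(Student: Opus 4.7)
The plan is to observe that the generalized Meixner weight already fits the template required by Theorems \ref{thm31} and \ref{thm33}, so nothing new really needs to be proved: the identities (\ref{deq:anbn2}) are just the instantiation of (\ref{deq:anbn}) with $\mu_0=M(\a,\b,z)$. The only task is to check the hypothesis of Theorem \ref{thm31}, namely that the weight has the form $\w(k)=c(k)z^k$ so that the moments can be generated from $\mu_0$ by repeatedly applying the Euler operator $\dz{z}=z\,\d/\d z$.

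The first step is to write the generalized Meixner weight (\ref{genmeixnerw}) as
\begin{equation*}
\w(k)=c(k)z^k,\qquad c(k)=\frac{(\a)_k}{(\b)_k\,k!},
\end{equation*}
which shows it belongs to the family considered in (\ref{def:mun}). Consequently
\[
\mu_n(z)=\sum_{k=0}^\infty k^n\,\frac{(\a)_k}{(\b)_k\,k!}\,z^k=\dz{z}^n(\mu_0),
\]
and by the previous theorem $\mu_0(z)=M(\a,\b,z)$.

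The second step is to invoke Theorem \ref{thm31}, which gives
\[
\Delta_n(z;\a,\b)=\W_n(\mu_0)=\W_n\big(M(\a,\b,z)\big),\qquad
\widetilde{\Delta}_n(z;\a,\b)=\dz{z}\W_n\big(M(\a,\b,z)\big),
\]
thereby reducing both Hankel determinants to the Wronskian determinant $\W_n$ evaluated at the Kummer function.

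The final step is to apply Theorem \ref{thm33} directly: it asserts precisely that $a_n^2=\dz{z}^2(\ln\W_n(\mu_0))$ and $b_n=\dz{z}(\ln[\W_{n+1}(\mu_0)/\W_n(\mu_0)])$. Substituting the identification $\W_n(\mu_0)=\Delta_n(z;\a,\b)$ from the previous step gives (\ref{deq:anbn2}). There is no real obstacle here; the entire argument is a one-line corollary of Theorems \ref{thm31} and \ref{thm33}, and the proof can simply be stated as ``This follows immediately from Theorem \ref{thm33} together with the expression (\ref{genM:Delta}) for the Hankel determinant.''
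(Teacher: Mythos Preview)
Your proposal is correct and matches the paper's own proof exactly: the paper simply says ``These follow immediately from Theorem~\ref{thm33},'' which is precisely the one-line argument you arrive at. Your expanded discussion of why the weight fits the template of Theorem~\ref{thm31} is accurate but more detailed than necessary, since that verification was already absorbed into the preceding theorem establishing (\ref{genM:Delta}).
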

\begin{proof}{These follow immediately from Theorem \ref{thm33}.}\end{proof}

Finally we relate the Hankel determinant $\Delta_n(z)$ to solutions of (\ref{eq:SV}), the \PV\ $\sigma$-equation.

\begin{theorem}{The function 
 \begin{equation} \ds S_n(z;\a,\b)=z\deriv{}{z} \ln\Delta_n(z;\a,\b),\label{def:Sn5}\end{equation}
with $\Delta_n(z;\a,\b)$ given by (\ref{genM:Delta}), satisfies the second-oder, second-degree equation
\begin{eqnarray} \fl \left(z\deriv[2]{S_n}{z} \right)^{\!2}=\left[(z+ n+\beta -1) \deriv{S_n}{z} -S_n-\tfrac12n(n-1+2\a) \right] ^2 \nonumber\\ 
-4 \deriv{S_n}{z}  \left(\deriv{S_n}{z}-n-\a+\b\right)\left[z\deriv{S_n}{z} -S_n+\tfrac12n(n-1) \right].\label{eq:sH5} \end{eqnarray}}\end{theorem}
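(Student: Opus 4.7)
The plan is to mirror the proof of the analogous result for the generalized Charlier case. Specifically, I would verify that (\ref{eq:sH5}) is equivalent to the \PV\ $\sigma$-equation (\ref{eq:SV}) via an affine transformation of the form $S_n(z;\a,\b) = \sV(z) + c_1 z + c_0$, where the constants $c_1, c_0$ depend only on $n$, $\a$, $\b$, and then invoke the characterization of special function solutions in (\ref{sol:SVi}) to identify precisely which solution of (\ref{eq:SV}) corresponds to the Wronskian $\Delta_n(z;\a,\b) = \W_n(M(\a,\b,z))$.

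First, I would specialize $C_1 = 1$ and $C_2 = 0$ in the definition of $\ph_{\a,\b}$, so that $\ph_{\a,\b}(z) = M(\a,\b,z)$. Then (\ref{sol:SVi}) reads
$$\sV(z) = z\deriv{}{z}\big(\ln\W_n(M(\a,\b,z))\big) - c_1 z - c_0,$$
with
$$c_1 = \tfrac14(3n+2\a-\b-1),\qquad c_0 = \tfrac58 n^2 + \tfrac14(2\a-3\b-1)n + \tfrac18(2\a-\b-1)^2,$$
and this $\sV$ satisfies (\ref{eq:SV}) with the $\k_j$ listed immediately after (\ref{sol:SVi}). Combining with the definition (\ref{def:Sn5}) of $S_n$ and the formula (\ref{genM:Delta}) for $\Delta_n(z;\a,\b)$ then gives the affine identification $\sV(z) = S_n(z;\a,\b) - c_1 z - c_0$.

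Second, I would substitute $\sV = S_n - c_1 z - c_0$ into (\ref{eq:SV}) and simplify. The term $(z\sV'')^2$ is invariant under the shift; the bracket $\{2(\sV')^2 - z\sV' + \sV\}^2$ picks up contributions via $\sV' = S_n' - c_1$; and the product $\prod_{j=1}^{4}(\sV' + \k_j)$, with the specific $\k_j$ from (\ref{sol:SVi}) inserted, should telescope so that two of the four linear factors combine to $\deriv{S_n}{z}(\deriv{S_n}{z} - n - \a + \b)$ while the remaining two multiply to give the bracketed expression $[z\deriv{S_n}{z} - S_n + \tfrac12 n(n-1)]$ appearing on the right-hand side of (\ref{eq:sH5}). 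Analogously, the square $\{2(\sV')^2 - z\sV' + \sV\}^2$ should reorganize into $[(z+n+\b-1)\deriv{S_n}{z} - S_n - \tfrac12 n(n-1+2\a)]^2$.

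The main obstacle is the algebraic bookkeeping: after the affine shift, numerous cross-terms in $z$, $S_n$, and $\deriv{S_n}{z}$ appear which must cancel exactly in order to reduce to the compact form (\ref{eq:sH5}). In particular, showing that the four $\k_j$-factors collapse into the two displayed factors together with the bracketed affine expression (for the correct parameter assignment inherited from (\ref{sol:SVi})) is where the delicate algebra lies. Once this parameter correspondence is pinned down, the remainder is a mechanical, if tedious, symbolic verification, paralleling the short verification carried out in the generalized Charlier case.
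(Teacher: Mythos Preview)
Your overall strategy is exactly the paper's: identify the affine shift $S_n=\sV+c_1z+c_0$ with the constants read off from (\ref{sol:SVi}), observe that this matches (\ref{def:Sn5}) once $C_1=1$, $C_2=0$, and then verify that the shifted equation coincides with (\ref{eq:sH5}). The paper records precisely this transformation and the parameter assignment (\ref{Sn5:params}), and leaves the algebraic check as ``as is easily verified''.

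There is, however, an error in your description of how the algebra unfolds which would mislead you if you tried to execute it. After the shift $\sV'=S_n'-c_1$, the four factors become
\[
\prod_{j=1}^4(\sV'+\k_j)=S_n'\,(S_n'-n)\,(S_n'-n-\a+1)\,(S_n'-n-\a+\b),
\]
so while two of them are indeed $S_n'$ and $S_n'-n-\a+\b$, the remaining pair $(S_n'-n)(S_n'-n-\a+1)$ depends only on $S_n'$ and cannot equal the bracket $[zS_n'-S_n+\tfrac12n(n-1)]$, which involves $z$ and $S_n$. Likewise $\{2(\sV')^2-z\sV'+\sV\}^2$ cannot on its own become $[(z+n+\b-1)S_n'-S_n-\tfrac12n(n-1+2\a)]^2$, since the former is quartic in $S_n'$ and the latter quadratic. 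The two pieces do not transform separately; instead, the quartic part of the squared brace cancels against part of $4\prod_j(\sV'+\k_j)$, and what survives regroups into the displayed difference of a square and a cubic-in-$S_n'$ term. Once you drop the expectation that the two blocks map across independently, the verification is indeed mechanical.
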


\begin{proof}{\Eref{eq:sH5} is equivalent to 
(\ref{eq:SV}) through the transformation
\begin{eqnarray}S_n(z;\a,\b) =\sV (z) &+ \tfrac14(2\a-\b+3n-1)z\nonumber\\
&+\tfrac58{n}^{2}+ \tfrac14(2\a-3\b-1) n+\tfrac18 (2\a -\b-1) ^{2},
\label{eq:Sn5}\end{eqnarray}
with parameters
 \begin{equation}\label{Sn5:params}\begin{array}{l@{\qquad}l}
\k_1=\tfrac14(2\a-\b-n-1),&\k_3=\tfrac14(2\a-\b+3n-1),\\[5pt]
\k_2=-\tfrac14(2\a+\b+n-3),&\k_4=-\tfrac14(2\a-3\b+n+1),
\end{array}\end{equation}as is easily verified. Then comparing (\ref{eq:Sn5}), with $S_n$ given by (\ref{def:Sn5}), to (\ref{sol:SVi}) gives the result.
}\end{proof}

\begin{remarks}{\quad \phantom{x}
\begin{enumerate}
\item In terms of $S_n(z;\a,\b)$ given by (\ref{def:Sn5}), the coefficients $a_n(z)$ and $b_n(z)$ in the recurrence relation (\ref{deq:rr2}) have the form 
\[a_n^2(z)=z\deriv{}{z} S_n(z;\a,\b),\qquad b_n(z) =S_{n+1}(z;\a,\b)-S_n(z;\a,\b).\]
\item Substituting the parameters (\ref{Sn5:params}) in (\ref{SV:params}) yields
\begin{equation}\label{Ham5:params}(a,b,\th)=(\a-1,\b-\a-n,1-\b-n),\end{equation}
and so (\ref{eq:sH5}) is equivalent to  \PV\ for the  parameters
\[(A,B,C,D)=\big(\tfrac12(\a-1)^2,-\tfrac12(\a-\b+n),\b+n-2,-\tfrac12\big).\]
\item Solving (\ref{eq:Sn5}) for $\sV (z)$ and substituting in (\ref{eq:PT.HM5.DE35}) yields the solution of the Hamiltonian system 
(\ref{eq:PT.HM.DE523}) given by
\begin{eqnarray*}\fl
q(z)={\frac {2zS_n''+4[S_n']^{2} -2\left(z-2\alpha+\beta-3n+1 \right) S_n'+2S_n  +n(n+2\alpha-1)}{ 4\big[S_n' -\alpha-n+1 \big] \big[S_n' -n \big]}},\\
\fl p(z)={\frac {2zS_n''-4[S_n']^{2} +2\left(z-2\alpha+\beta-3n+1 \right) S_n'-2S_n  -n(n+2\alpha-1)}{ 4\big[S_n' -\alpha+\beta-n\big] }},\end{eqnarray*}
for the parameters (\ref{Ham5:params}).
\end{enumerate}}\end{remarks}

 \section{\label{sec:dis}Discussion}\def\la{\alpha}
In this paper we have studied semi-classical generalizations of the Charlier polynomials and the Meixner polynomials. These discrete orthogonal polynomials satisfy three-term recurrence relations whose coefficients depend on a parameter. We have shown that the coefficients in these recurrence relations can be explicitly expressed in terms of Wronskians of modified Bessel functions and Kummer functions, respectively. These Wronskians also arise in the description of special function solutions of the third and fifth \p\ equations and the second-order, second-degree equations satisfied by the associated Hamiltonian functions. The results in this paper are more comprehensive than those in \cite{refFvA13} for generalized Charlier polynomials and in \cite{refBFvA,refFvA11} for generalized Meixner polynomials. The link between the semi-classical discrete orthogonal polynomials and the special function solutions of the \p\ equations is the moment for the associated weight which enables the Hankel determinant to be written as a Wronskian. In our opinion, this  illustrates the increasing significance of the \p\ equations in the field of orthogonal polynomials and special functions.

\hide{Filipuk, van Assche and Zhang \cite{refFvAZ} comment that \begin{quote}
``\textit{We note that for classical orthogonal polynomials (Hermite, Laguerre, Jacobi) one
knows these recurrence coefficients explicitly in contrast to non-classical weights}".
\end{quote}}

\hide{A motivation for this study was the fact that recurrence coefficients of semi-classical orthogonal polynomials can often be expressed in terms of solutions of the \peqs. For example, recurrence coefficients are expressed in terms of solutions of \PII\ for semi-classical orthogonal polynomials with respect to the Airy weight
\begin{equation*}\w(x;t)=\exp\left(\tfrac13x^3+tx\right),\qquad x^3<0,\end{equation*}with $t\in\R$,
  \cite{refMagnus95};  
in terms of solutions of \PIII\ for the perturbed Laguerre weight
\begin{equation*}\w(x;t)=x^\a\exp\left(-x-\ifrac{t}{x}\right),\qquad x\in\R^+,\end{equation*} with $t\in\R^+$ and $\a>0$ \cite{refCI10};  
{in terms of solutions of \PIV\  for the weights
\[\begin{array}{l@{\qquad}l}
\w(x;t)=x^{\a-1}\exp\left(-x^2+tx\right),& x\in\R^+,\\[2.5pt]
\w(x;t)=|x|^{\a-1}\exp\left(-x^2+tx\right),& x\in\R,\\[2.5pt]
\w(x;t)=|x|^{2\a-1}\exp\left(-\tfrac14x^4+tx^2\right),& x\in\R,
\end{array}\] with $t\in\R$ and $\a>0$\ \cite{refCF06,refPACJordaan13,refFvAZ};} 
in terms of solutions of \PV\ for the weights
\[\begin{array}{l@{\qquad}l}
\w(x;t)=(1-x)^{\a}(1+x)^{\b}\e^{-tx},& x\in[-1,1],\\[2.5pt]
\w(x;t)=x^\a(1-x)^\b\e^{-t/x},& x\in[0,1],\\[2.5pt]
\w(x;t)=x^{\a}(x+t)^{\b}\e^{-x},& x\in\R^+,
\end{array}\] with $t\in\R^+$ and $\a,\b>0$\ \cite{refBCE10,refCD10,refFW07}; 
and in terms of solutions of \PVI\ for the generalized Jacobi weight
\begin{equation*}
\w(x;t)=x^\a(1-x)^\b(t-x)^\gamma,\qquad x\in[0,1],
\end{equation*}with $t\in\R$ and $a,\b,\gamma>0$\ \cite{refDZ10,refMagnus95}.}%

\hide{Recurrence relations for orthogonal polynomials with respect to discontinuous weights which involve the Heaviside function $\mathcal{H}(x)$ have also been expressed in terms of solutions of \p\ equations.
For example, in terms of solutions of \PIV\ for the discontinuous Hermite weight \cite{refFW03}
\begin{equation*}\w(x;t)=\left\{1-\xi\mathcal{H}(x-t)\right\}|x-t|^{\la-1}\exp(-x^2),\qquad x\in\R,\end{equation*}
with $t\in\R$, $\la>0$ and $0<\xi<1$;  
in terms of solutions of \PV\  for the deformed Laguerre weight \cite{refBC09,refFO10}
\begin{equation*}\w(x;t)=\left\{1-\xi\mathcal{H}(x-t)\right\}|x-t|^\a x^\b\e^{-x}, \qquad x\in\R^+,\end{equation*}
with $t\in\R^+$, $\a,\b>0$ and $0<\xi<1$; and  
in terms of solutions of \PVI\ for the discontinuous Jacobi weight \cite{refCZ10}
\begin{equation*}\w(x;t)=\left\{1-\xi\mathcal{H}(x-t)\right\}x^\a(1-x)^\b,\qquad x\in[0,1],\end{equation*}
with $t\in[0,1]$, $\a,\b>0$ and $0<\xi<1$.}%

\ack I thank Kerstin Jordaan, Ana Loureiro, James Smith and Walter van Assche for their helpful comments and illuminating discussions.

\def\ams{American Mathematical Society}
\def\bM{Acta Appl. Math.}
\def\ARMA{Arch. Rat. Mech. Anal.}
\def\bull{Acad. Roy. Belg. Bull. Cl. Sc. (5)}
\def\AC{Acta Crystrallogr.}
\def\AM{Acta Metall.}
\def\ampa{Ann. Mat. Pura Appl. (IV)}
\def\AP{Ann. Phys., Lpz.}
\def\APNY{Ann. Phys., NY}
\def\APP{Ann. Phys., Paris}
\def\BAMS{Bull. Amer. Math. Soc.}
\def\CJP{Can. J. Phys.}
\def\cmp{Commun. Math. Phys.}
\def\CMP{Commun. Math. Phys.}
\def\cpam{Commun. Pure Appl. Math.}
\def\CPAM{Commun. Pure Appl. Math.}
\def\CQG{Classical Quantum Grav.}
\def\crp{C.R. Acad. Sc. Paris}
\def\CSF{Chaos, Solitons \&\ Fractals}
\def\DE{Diff. Eqns.}
\def\DU{Diff. Urav.}
\def\ejam{Europ. J. Appl. Math.}
\def\EJAM{Europ. J. Appl. Math.}
\def\funk{Funkcial. Ekvac.}
\def\FUNK{Funkcial. Ekvac.}
\def\IP{Inverse Problems}
\def\JAMS{J. Amer. Math. Soc.}
\def\JAP{J. Appl. Phys.}
\def\JCP{J. Chem. Phys.}
\def\JDE{J. Diff. Eqns.}
\def\JFM{J. Fluid Mech.}
\def\JJAP{Japan J. Appl. Phys.}
\def\JP{J. Physique}
\def\JPhCh{J. Phys. Chem.}
\def\JMAA{J. Math. Anal. Appl.}
\def\JMMM{J. Magn. Magn. Mater.}
\def\JMP{J. Math. Phys.}
\def\jmp{J. Math. Phys}
\def\JNMP{J. Nonl. Math. Phys.}
\def\JPA{J. Phys. A: Math. Theor.}
\def\JPB{J. Phys. B: At. Mol. Phys.} 
\def\jpb{J. Phys. B: At. Mol. Opt. Phys.} 
\def\JPC{J. Phys. C: Solid State Phys.} 
\def\JPCM{J. Phys: Condensed Matter} 
\def\JPD{J. Phys. D: Appl. Phys.}
\def\JPE{J. Phys. E: Sci. Instrum.}
\def\JPF{J. Phys. F: Metal Phys.}
\def\JPG{J. Phys. G: Nucl. Phys.} 
\def\jpg{J. Phys. G: Nucl. Part. Phys.} 
\def\JSP{J. Stat. Phys.}
\def\JOSA{J. Opt. Soc. Am.}
\def\JPSJ{J. Phys. Soc. Japan}
\def\JQSRT{J. Quant. Spectrosc. Radiat. Transfer}
\def\LMP{Lett. Math. Phys.}
\def\LNC{Lett. Nuovo Cim.}
\def\NC{Nuovo Cim.}
\def\NIM{Nucl. Instrum. Methods}
\def\NL{Nonlinearity}
\def\NMJ{Nagoya Math. J.}
\def\NP{Nucl. Phys.}
\def\pl{Phys. Lett.}
\def\PL{Phys. Lett.}
\def\PMB{Phys. Med. Biol.}
\def\PR{Phys. Rev.}
\def\PRL{Phys. Rev. Lett.}
\def\PRS{Proc. R. Soc.}
\def\prsl{Proc. R. Soc. Lond. A}
\def\PRSL{Proc. R. Soc. Lond. A}
\def\PS{Phys. Scr.}
\def\PSS{Phys. Status Solidi}
\def\PTRS{Phil. Trans. R. Soc.}
\def\RMP{Rev. Mod. Phys.}
\def\RPP{Rep. Prog. Phys.}
\def\RSI{Rev. Sci. Instrum.}
\def\SAM{Stud. Appl. Math.}
\def\sam{Stud. Appl. Math.}
\def\SSC{Solid State Commun.}
\def\SST{Semicond. Sci. Technol.}
\def\SUST{Supercond. Sci. Technol.}
\def\ZP{Z. Phys.}
\def\JCAM{J. Comput. Appl. Math.}

\def\OUP{Oxford University Press}
\def\CUP{Cambridge University Press}

\def\refpp#1#2#3#4#5{\bibitem{#1} \textrm{\frenchspacing#2}\  #5\ \textrm{#3} #4}

\def\refjl#1#2#3#4#5#6#7{\bibitem{#1} \textrm{\frenchspacing#2}\ #7\ \textrm{#6}
{\frenchspacing\it#3}\ \textbf{#4}\ #5}

\def\refjltoap#1#2#3#4#5#6#7{\bibitem{#1} \textrm{\frenchspacing#2}\ #7\ \textrm{#6} 
\textit{\frenchspacing#3}\ #5} 

\def\refbk#1#2#3#4#5{\bibitem{#1} \textrm{\frenchspacing#2}\ #5\ \textit{#3}\ #4} 

\def\refbkk#1#2#3#4#5{\bibitem{#1} \textrm{\frenchspacing#2}\ #5\ \textit{#3}\ (#4)} 

\def\refcf#1#2#3#4#5#6#7{\bibitem{#1} \textrm{\frenchspacing#2}\ #7\ {#3}
 \textit{#4}\ {\frenchspacing#5}\ #6}

\section*{References}

\end{document}

\newpage\section{Krawtchouk polynomials and generalized Krawtchouk polynomials}
\subsection{Krawtchouk polynomials}
\[K_n(k;p,N)= {}_2F_1\left(-n,-k;-N;\frac1p\right),\qquad 0<p<1,\quad k\in\{0,1,\ldots,N\}.\]
\begin{eqnarray*} &\w(k) = \left({N\atop k}\right)p^k(1-p)^{N-k},\qquad 0<p<1,\quad k\in\{0,1,\ldots,N\},\\
&a_n^2=np(1-p)(N+1-n),\quad b_n = p(N-n)+n(1-p).\end{eqnarray*}

\subsection{Generalized Krawtchouk polynomials}
\[\w(k) = \left({N\atop k}\right)\frac{z^k}{(1-\a)_k},\qquad k\in\{0,1,\ldots,N\}\]
\begin{eqnarray*}
(x_n+y_n)(x_{n+1}+y_n)&=-\frac{y_n(N+1+Ny_n)(N+1-\a+Ny_n)}{zN},\\
(x_n+y_n)(x_{n}+y_{n-1})&=\frac{x_n(N+1-Nx_n)(N+1-\a-Nx_n)}{N(Nx_n-n)},
\end{eqnarray*}
\begin{eqnarray*}x_0&=0,\\ y_0&=-\frac{N+1+z-\a}{N}-\frac{z}{1-\a}\,\frac{M(1-N,2-\a,-z)}{M(-N,1-\a,-z)}
\\&=-\frac{N+1+z-\a}{N}-\frac{zL_{N-1}^{(1-\a)}(-z)}{NL_{N}^{(-\a)}(-z)}.\end{eqnarray*}

\[A=\tfrac12(\a-N-1)^2,\quad B=-\tfrac12(n-N)^2,\quad C=-\ep(n+\a),\quad D=-\tfrac12\ep^2.\]

\end{document}